\tikzset{main node/.style={circle,fill=blue!20,draw,minimum size=1cm,inner sep=0pt},
            }
\newtheorem{teo}{Theorem}
\newtheorem{defi}{Definition}
\newtheorem{prop}{Proposition}
\newtheorem{rem}{Remark}
\newcommand{\absolute}[1]{\left\lvert#1\right\rvert}
\DeclareMathOperator{\Vol}{Vol}
\DeclareMathOperator{\Spec}{Spec}
\DeclareMathOperator{\supp}{supp}
\title{Time-Varying Energy Landscapes and Temperature paths: Dynamic Transition Rates in locally Ultrametric Complex Systems .}
\author{
\'Angel Mor\'an Ledezma \\
angel.ledezma@kit.edu 
\\
Institute of Photogrammetry and Remote Sensing
\\
Karlsruhe Institute of Technology
\\
Englerstr.\ 7
\\
76131 Karlsruhe
\\
Germany
}
\date{}
\begin{document}

\maketitle

\begin{abstract}
In this work, we study the dynamics of complex systems with time-dependent transition rates, focusing on $p$-adic analysis in modeling such systems. Starting from the master equation that governs the stochastic dynamics of a system with a large number of interacting components, we generalize it by $p$-adically parametrizing the metabasins to account for states that are organized in a fractal and hierarchical manner within the energy landscape. This leads to a not necessarily time homogeneous Markov process described by a time-dependent operator acting on an ultrametric space. We prove well-posedness of the initial value problem and analyze the stochastic nature of the master equation with time-dependent transition-operator. We demonstrate how ultrametricity simplifies the description of intra-metabasin dynamics without increasing computational complexity. We apply our theoretical framework to two scenarios: glass relaxation under rapid cooling and protein folding dynamics influenced by temperature variations. In the glass relaxation model, we observe anomalous relaxation behavior where the dynamics slow down during cooling, with lasting effects depending on how drastic the temperature drop is. In the protein folding model, we incorporate temperature-dependent transition rates to simulate folding and unfolding processes across the melting temperature. Our results capture a "whiplash" effect: from an unfolded state, the system folds and then returns to an unfolded state (which may differ from the initial one) in response to temperature changes. This study demonstrates the effectiveness of $p$-adic parametrization and ultrametric analysis in modeling complex systems with dynamic transition rate, providing analytical solutions that improve our understanding of relaxation processes in material and biological systems. 

\end{abstract}

\tableofcontents

\section{Introduction}

In the physics of complex systems, the so called master equation is a fundamental tool for describing the stochastic dynamics of systems with a large number of interacting components. \cite{Bowman2014Markov,Husic2018Markov,Peliti2021Stochastic,Mauro2021Materials}. These systems are constituted by a huge amount of possible configurations, each corresponding to a specific state of the system. The master equation provides a mathematical framework to model the time evolution of the probability distribution over these states, assuming the system randomly jumps from one state to another.

A central concept in this context is the energy landscape; a multidimensional surface representing the potential energy of a system as a function of its configurations. Each point on this landscape corresponds to a specific configuration, and the local minima of this surface represent stable or metastable states where the system tends to stay due to lower potential energy compared to neighboring configurations. \cite{ABZ2014,ABKO2002,WPG2013,Frauenfelder2010Physics,Frauenfelder1991Energy,Khrennikov2021Ultrametric,Mauro2012Minimalist,Peliti2021Stochastic,Mauro2021Materials}

The master equation describe a stochastic process (a continuous time Markov chain (CTMC); a time homogeneous Markov process), where the local minima of the energy surface are mapped to states of this Markov chain, and the transitions between them are governed by transition rates determined by the energy barriers separating the configurations. The probabilities in this process can be interpreted as relaxation processes where systems evolve toward equilibrium. \cite{ZunigaNetworks2,Mauro2021Materials,Peliti2021Stochastic,Nolting2005Protein}.The transition rates often depend on factors like temperature and are modeled using Arrhenius-type expressions, reflecting the likelihood of the system overcoming energy barriers. This allow us to model the dynamics in systems ranging from protein folding to glass relaxation and chemical kinetics. 

For example, in protein folding, each state represents a distinct three-dimensional conformation of a protein molecule. The master equation models how the protein explores its energy landscape to find the lowest energy conformation, navigating through various metastable states \cite{Frauenfelder2010Physics,Frauenfelder2003Myoglobin,Nolting2005Protein}. In glass relaxation, the master equation describes how the system explores a rugged energy landscape with many metastable states. \cite{Mauro2021Materials,Mauro2007Metabasin}.

While the energy landscape approach offers valuable insights into the multiple timescales and governing physics of glass transitions and and relaxation behaviors including protein folding, it can be computationally intensive due to the huge number of local minima or states involved. As J. C. Mauro et. al. point out \cite{Mauro2012Minimalist}, it is crucial to develop simplified or analytically solvable models that capture the essential features of these complex processes. Moreover, recognizing the limitations and predictions of Markov state models is important in order to have a deeper understanding on complex systems. One important assumption in the modeling of such processes is the time-independence of the transition rates. Nevertheless, the transition rates on certain glass relaxation models depend on a 'temperature path', implying that temperature is not constant overtime, making this transition rate time-dependent \cite{Mauro2021Materials,Guo2012Temperature}. Furthermore, like all chemical reactions protein folding depends on its environment; hence, temperature and the energy landscape itself could be not constant over time. This variability may arise from artificial manipulations, where temperature changes are induced by artificial modulation  (see for example\cite{Guo2012Temperature}) , or from intracellular thermal interactions that cause energy barriers and temperatures to fluctuate, moreover, due to their dynamic nature, proteins spontaneously unfold and refold many times in vivo as shown in \cite{WPG2013}, as also capture by our simulations in chapter 3. Therefore, understanding state models with variable transition rates in realistic contexts---such as in glasses and proteins---becomes important for capturing their real-world dynamics. This work aims to give some insights in this direction. 

Dealing with a large number of transition rates poses significant challenges for the solvability of the problem, even when transitions are constant. Introducing time-dependent transitions further complicates the scenario. To address this complexity, we employ the approach inspired by W. Zuñiga, where interactions within metabasins are modeled using \( p \)-adic parametrization. This concept, used by Zuniga in \cite{ZunigaNetworks2}, has proven to be a powerful tool, allowing for the representation of an arbitrarily large number of states within each metabasin. Moreover,  in \cite{ZunigaNetworks2}, W. Zuñiga was able to show how this model allow to rigorously explains how, under some non-equilibrium conditions, a complex system can reach absorbing states by imposing two different transition functions.

Our motivation for utilizing \( p \)-adic analysis comes not only from its computational advantages but also from its widespread use in the literature as a tool to model hierarchical complex systems, see \cite{ABZ2014,XK2018b,ABKO2002,DragovichBA2010,ABK1999,DXKM2021,Avetisov2002Padic,Khrennikov2021Ultrametric}, and the references there in. It has been proposed as a valuable tool for analyzing energy landscapes, were recent studies such as \cite{Charbonneau2014Fractal,jammingultra} confirming the fractal and hierarchical  behavior of the energy landscape in structural glasses , and well known studies showing the ultrametric nature of protein folding energy landscapes \cite{Frauenfelder2010Physics} serves also as main motivation for the ultrametric methods in this work. In this article, we show how the \( p \)-adic parametrization offers a novel technique for modeling complex systems with time dependent transition rates, since the ultrametric parametrization of this states allow us to give an analytical solution to the contribution of these states to the overall the dynamics. That is, we are able to describe a large number of states without increasing computational complexity. This approach is particularly effective in describing the dynamics of a multitude of states inside a metabasin.

This article aims to explore the dynamics of a CTMC with time dependent transition rates, highlighting how \( p \)-adic analysis emerges as a powerful and natural tool in studying these processes. Now we proceed to describe our results. 

In chapter $2$ we introduce the general master equation for a discrete system space and what we understand by  $p$-adic parametrization this leads to a generalization of the classical CTMC differential equaiton: 

\[
\frac{d}{dt} f_a(t) = \sum_b [ w_{ab} f_b(t) - w_{ba} f_a(t) ] \rightarrow \frac{d}{dt} f(x,t) = \int_{\mathbb{Z}_p} w(|x - y|)[ f(y,t) - f(x,t) ]  dy
\]
Where $w_{ab}$ represent the transition rate probability between states $a$ and $b$, $w(|\cdot|_p)$ is a radial function on $\mathbb{Z}_p$, the $p$-adic integers, which can be seen as a infinitesimal cluster of states hierarchically organized following a fractal, or self-similar, structure. We then consider a finite set of this clusters, given by translations of $\mathbb{Z}_p$ these are the metabasins of the model. The transitions inside each metabasin are controlled by a radial function, while the transitions between the metabasins $I+\mathbb{Z}_p$ are controlled by a transition rate matrix. The dynamic of such a system is introduced and briefly studied in Section 2.2. After this we allow time dependence on the transitions by introducing the time dependent operator  
\[\mathbf{W}(t)f(x)=\int_{K_N}[w(x,y,t)f(y)-w(y,x,t)f(x)]dy.\]
where $K_N$ is the disjoint union of the metabasins and $w(x,y,t)$ is the transition rate function. The master equation attached to this operator studied in Chapter 2-Section 2.3, give rise to a different stochastic process (a strong Markov process, not necessarily time-homogeneous). Theorem 1 is our first main result, here we show that the initial value problem attached to this process is well-possed. Furthermore, in Theorem 2 we describe the stochastic nature of the master equation attached to the time dependent operator $\mathbf{W}(t)$. Section 2.4 is central on the following resutls since here we expose how ultrametricity ($p$-adic parametrization of the metabasins) leads to an easy description of its contribution on the overall dynamics, this is achieved by using the Trotter-Kato Theorem for semigroups. 

In Chapter 3 we introduce a two metabasin model; where we denote by $B_u=\mathbb{Z}_p$ the higher energy metabasin  (which in the case of our protein example correspond to the unfolded basin, with several unfolded-degenerate states, and by  $B_f$ the lower energy one, corresponding to the folded basin in the protein example, with several folded-degenerate states). This model can be considered as a time-dependent generalization of the one given by the minimalist energy introduced in \cite{Mauro2012Minimalist}. We are interested in understand the behavior of the characteristic relaxation of a sub basin of $B_U$, $B_{r_0}\subset B_U$. Here, the characteristic relaxation of this region, $S(t)$, will be understood as the evolution of population (or occupation probability) in the domain of the initial distribution, in this case $B_{r_0}$. In order to describe the function $S(t)$ we use Trotter Kato to describe the contribution of the transitions between meta-basins (or in the terminology of J. C. Mauro in \cite{Mauro2012Minimalist} on glass-relaxation the $\alpha$ transition, and in the context of protein folding, the transition to the unfolded to folded metabasins), while the contribution to $S(t)$ given by the intra-metabasin dynamic $p$-adically parametrized (which correspond to the $\beta$ transitions or the dynamic between unfolded states) is directly computed by the previous analysis on chapter 2. The characteristic relaxation $S(t)$ has the following form.
\[
S(t) = \underbrace{p^{-r_0} p_1(t)}_{\text{metabasin transition contribution}} + \underbrace{p^{-r_0} \sum_{\substack{\operatorname{Supp} \, \psi_{r,j,n} \\ \not\subset B_U}} |C_{r,j,n}|^2 e^{-\int_{0}^{t} \gamma_{r,U}(\tau) \, d\tau}}_{\text{intra-metabasin (}\!p\text{-adic) contribution}}.
\]
We then apply our results to two scenarios. In the first scenario, we use similar data used in \cite{Mauro2012Minimalist} to modeled the transitions of a glass-relaxation phenomena, while allowing the temperature to drop in a fast way with the purpose of simulating the effect of rapid cooling at different temperatures. We can observe in section 3.2 how an anomalous relaxation take place; the dynamic is slowed down during the cooling and the effect depend on the magnitud of the temperature drop. Moreover the slowdown in the dynamics has a long lasting effect, even when the temperature ceases to drop, with the effect depending on how drastic the temperature change has been over time.  In the second scenario we use the temperature dependence on the transition rates of protein folding studied in \cite{Guo2012Temperature} modeled by the equations: 
\[
\ln\left(\frac{k_0}{k_f(T)}\right) = \frac{1}{R T} \left( \Delta H_f - T \Delta S_f + \Delta C_p^f \left( T - T_m + T \ln\left( \frac{T_m}{T} \right) \right) \right),
\]

\[
\ln\left(\frac{k_0}{k_u(T)}\right) = \frac{1}{R T} \left( \Delta H_u - T \Delta S_u + \Delta C_p^u \left( T - T_m + T \ln\left( \frac{T_m}{T} \right) \right) \right),
\]
where the thermodynamical parameters associated with the in vitro case are also given in this source. In this case we observe a highly different dynamic with the two-basin model: The temperature dependence transitions allow us to describe in time the rate-transitions, which vary in a way that allow the system to go from a temperature below the melting point and above the melting point. This is reflected in the behavior of $S(t)$, where the system go from an unfolded state with probability $1$ at time zero, to a "whiplash" effect, where the system travel to the folded state with high probability, and then return (after crossing the melting temperature) to the unfolded metabasin. Moreover, even though the system return with high probability to an unfolded state, does not necessarily return to the original state $B_{r_0}$, this state can now be occupied with a lower probability (corresponding to the volume of the sub-region) which tell us how the system can occupy any other unfolded state in the return. 

\section{Time dependent transition rates and $p$-adic transition networks.}


\subsection{The \(p\)-adic numbers and trees.}
Let $\mathbb{Q}$ be the set of rational numbers equipped with the $p$-adic norm. That is, for a given prime number $p$ and $\frac{a}{b}\in \mathbb{Q}\setminus\{0\}$, we define 
\[
|x|_p := p^{-\nu(x)}
\]
where  $\frac{a}{b}=p^\nu \frac{a'}{b'}$ for some suitable $\nu\in\mathbb{Z}$ and
some co-prime integers $a'$ and $b'$. By setting $|0|_p := 0$, the function $|\cdot|_p:\mathbb{Q}_p\rightarrow \mathbb{R}_{\geq0}$ define a norm on $\mathbb{Q}$.  The completion of $\mathbb{Q}$ via $|\cdot|_p$ is called the field of $p$-adic numbers, and is denoted by $\mathbb{Q}_p$. The space $(\mathbb{Q}_p,|\cdot|_p) $ is an ultrmatric space, that is the ultrametric inequality holds: $|x+y|_p\leq \max \{|x|_p,|y|_p\} $. The space $\mathbb{Q}_p$ has attached a Haar measure denoted by $dx$. Each $p$-adic number $x\in \mathbb{Q}_p$ has can be represented as a convergent Laurent series of the form
\[
x=\sum_{k=\nu}^\infty x_k p^k.
\]
where $x_k\in\{0,\dots,p-1\}$. We denote the unit ball of this metric space by $\mathbb{Z}_p:=\{x\in \mathbb{Q}_p: |x|_p\leq 1 \}.$  A function $w:\mathbb{Q}_p\rightarrow \mathbb{R}$ is called radial if $w(x)=w(|x|_p)$.  \newline

Let $\mathbb{Z}_p/p^n\mathbb{Z}_p:=G_n$ be the finite group consisting on the elements of the form
\[G_n\ni x=a_0+a_1p+...+a_{n-1}p^{n-1},\]
where $a_i\in \{0,...,p-1\}$. Therefore we have the following decomposition. 
\[\mathbb{Z}_p=\bigsqcup_{a\in G_n}a+p^n\mathbb{Z}_p.\]

\begin{figure}[H]
    \centering
\begin{tikzpicture}[
  grow=down,
  level distance=1.5cm,
  level 1/.style={sibling distance=4cm},
  level 2/.style={sibling distance=2cm},
  level 3/.style={sibling distance=1cm},
  every node/.style={draw, rectangle, minimum width=1cm, inner sep=2mm, align=center}
  ]
  \node {} 
    child {node {0}
      child {node {00}
        child {node {000}}
        child {node {001}}
      }
      child {node {01}
        child {node {010}}
        child {node {011}}
      }
    }
    child {node {1}
      child {node {10}
        child {node {100}}
        child {node {101}}
      }
      child {node {11}
        child {node {110}}
        child {node {111}}
      }
    };

  \node[anchor=east] at (-4,0) {Level 0: radius $2^0=1$};
  \node[anchor=east] at (-4,-1.5) {Level 1: radius $2^{-1}=\frac{1}{2}$};
  \node[anchor=east] at (-4,-3) {Level 2: radius $2^{-2}=\frac{1}{4}$};
\end{tikzpicture}
    \caption{Tree representation of $G_3$ for $p=2$}
    \label{fig:tree}
\end{figure}

The elements of the group $G_n$ are in one-to-one correspondence with the leafs of a tree. In the tree, the length of the common initial path from the root indicates how many leading digits the two elements share; thus, the node where their paths first diverge determines the $p$-adic distance between them. In particular, two leaves that share a common path of length $k$ are at distance $p^{-k}$ apart, as shown in figure \ref{fig:tree}.

Every node at a given level represents a $p$-adic ball whose radius is determined by the number of fixed digits. The tree structure thereby encapsulates both the algebraic representation of $G_n$ and the topology given by the $p$-adic metric, where the closeness of any two elements is reflected in the depth of their most recent common ancestor in the tree. \\

\subsection{Ultrametric dynamics and fractal energy landscapes}

The dynamics of a physical system subject to random interactions with a heat reservoir can be modeled by a master equation \cite{Mauro2021Materials}. This is a central idea in stochastic thermodynamics and complex systems. This approach involves the description of a random walk on an energy landscape containing a very large number of local minima, which constitute the stable conformations (or states) of the complex system. The master equation becomes a useful technique for modeling the fluctuations and relaxations of such systems, since these phenomena correspond in this model, to jumps between states. 

\begin{figure}[H]
    \centering
 \begin{tikzpicture}[scale=0.9, domain=-1:9]
  \coordinate (start) at (-1,2.5); 
  \coordinate (m1) at (0,0);
  \coordinate (m2) at (1,0);
  \coordinate (m3) at (2,0);
  \coordinate (m4) at (3,0);
  \coordinate (m5) at (5,0);
  \coordinate (m6) at (6,0);
  \coordinate (m7) at (7,0);
  \coordinate (m8) at (8,0);
  \coordinate (end) at (9,2.5); 

  \foreach \m in {m1,m2,m3,m4,m5,m6,m7,m8}{
    \filldraw (\m) circle (2pt);
  }

  \draw[thick, smooth]
    (start)
    .. controls (-0.5,1.5) .. (m1)
    .. controls (0.5,0.5) .. (m2)
    .. controls (1.5,1) .. (m3)
    .. controls (2.5,0.5) .. (m4)
    .. controls (4,2) .. (m5)
    .. controls (5.5,0.5) .. (m6)
    .. controls (6.5,1) .. (m7)
    .. controls (7.5,0.5) .. (m8)
    .. controls (8.5,1.5) .. (end);

  \foreach \i/\m in {1/m1,2/m2,3/m3,4/m4,5/m5,6/m6,7/m7,8/m8}{
    \node[below] at (\m) {$m_{\i}$};
  }

  \node[above] at (-0.5,1.5) {};
  \node[above] at (0.5,0.5) {$\color{green} w_3$};
  \node[above] at (1.5,1) {$\color{blue} w_2$};
  \node[above] at (2.5,0.5) {$\color{green} w_3$};
  \node[above] at (4,2) {$\color{red} w_1$};
  \node[above] at (5.5,0.5) {$\color{green} w_3$};
  \node[above] at (6.5,1) {$\color{blue} w_2$};
  \node[above] at (7.5,0.5) {$\color{green} w_3$};
  \node[above] at (8.5,1.5) {};

  \draw[->] (-1.5,0) -- (-1.5,2.8) node[left] {Energy};

  \draw[->] (-1.5,-0.5) -- (9.5,-0.5) node[below] {Configuration space};

   \draw[dashed] (-1,2.5) -- (-1,0);
   \draw[dashed] (9,2.5) -- (9,0);

\end{tikzpicture}
    \caption{Hierarchical (rugged) energy landscape where each minima is in one-to-one correspondence with an element of $G_3$ for $p=2$ }
    \label{fig:localenergy}
\end{figure}
A master equation is defined by the jump rates $w_{a,b}(t)>0$ from a discrete state $b$ to $a$. The system of equations has the form 
\[\dfrac{d}{dt} f_a(t)=\sum_{b}[w_{a,b}(t)f_b(t)-w_{b,a}(t)f_a(t)],\]
where $f_a(t)$ is the probability distribution or population in the local minimum $a$. We say that the jump rate $w_{a,b}$ admits a $p$-adic parametrization if $w_{a,b}=w(|a-b|_p,t)$ where $a,b\in G_n$  for $n\geq 0$, see Figure \ref{fig:localenergy}. If $n\rightarrow \infty$, that is, the number of states or local minima tends to infinity,  we can obtain the infinitesimal $p$-adic system described by the equation
\begin{equation}
    \label{balleq}
\frac{d}{dt}f(x,t)=\int_{\mathbb{Z}_p}w(|x-y|,t)(f(y,t)-f(x,t))dy.
\end{equation}
When the function $w(|\cdot|_p,t)= w(|\cdot|_p)$ does not depend on time, the attached master equation describes the diffusion on a $p$-adic ball studied, for example in, \cite{ZunigaNetworks2}.  The attached energy landscape is a fractal, hierarchically organized as shown in figure \ref{fig:localenergy}, but this time we have an infinite sequence of energy barriers; each radial value of $w$ depend on the corresponding height of the barrier, as schematically shown in figure \ref{fig:fractalenergy}.

\begin{figure}[H]
    \centering
    \includegraphics[width=0.6\linewidth]{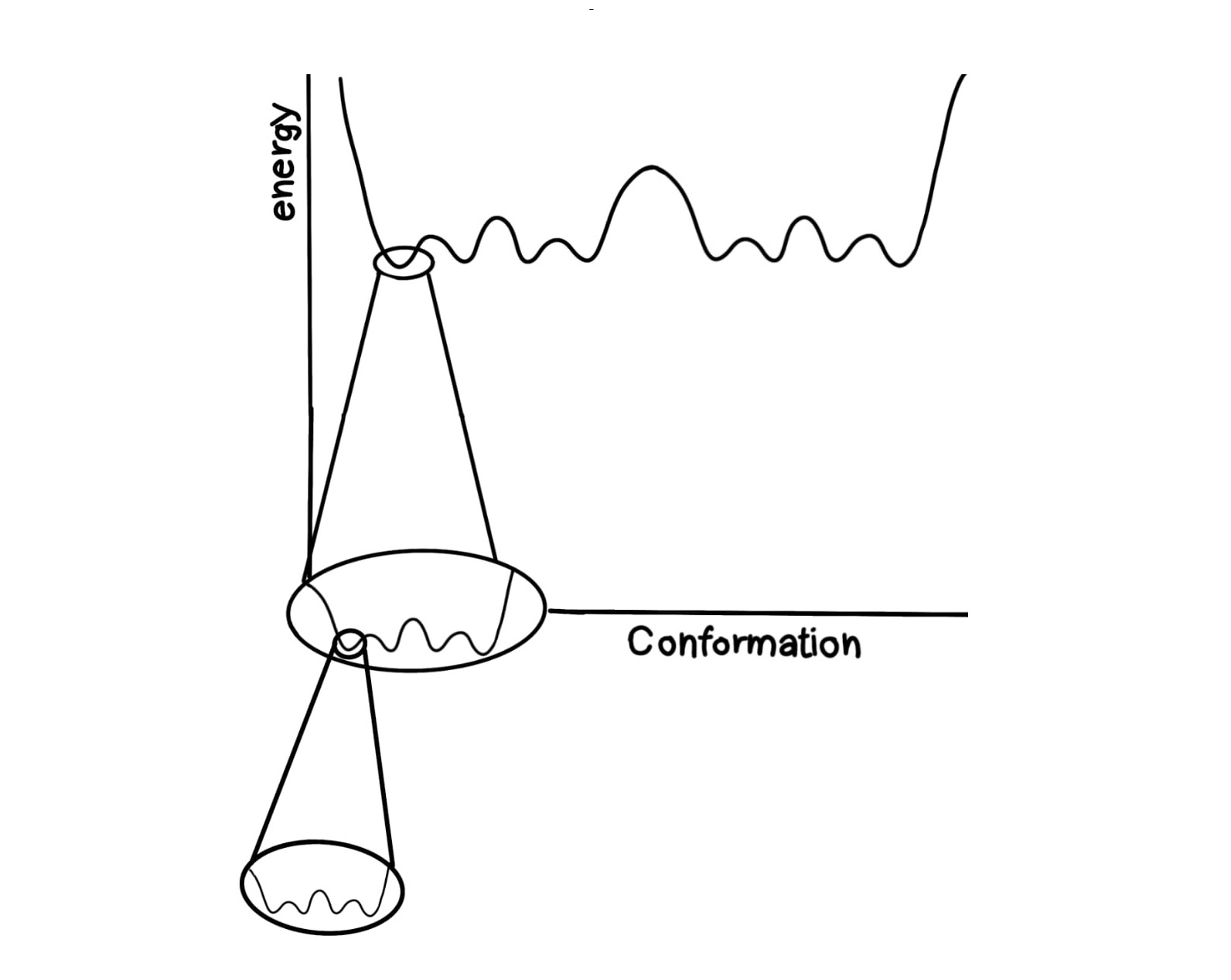 } \caption{Ultrametric energy landscape, where the infinite number of energy barriers follow a self-similar and hierarchical organization in sub-basins, all minima have the same energy level (degenerate).}
    \label{fig:fractalenergy}
\end{figure}

Now, we are interested in a more general situation. Consider a disjoint collection of $p$-adic balls of radius 1
\[
B(I)=\{x\in \mathbb{Q}_p: |x-I|\leq 1\},
\]
where $I$ belongs to a finite set $V$ with cardinality $|V|=N$ . Let 
\[
K_N =\bigsqcup_{I\in V}B(I). 
\]
Each ball represents a cluster of states or minima, or as described in, cf.\ e.g.\ \cite{ABKO2002}, these balls represent separated basins of an energy landscape. Each state inside these basins represent a minima which is hierarchically organized.  \\

Like all chemical reactions, protein folding is dependent on its environment, and in the case of glass relaxation, the process may depend on a variable temperature. In the literature, including that of $p$-adic  mathematical physics, the energy landscape is usually considered to be constant in time. Nonetheless, there are several situations where the transition rates are time-dependent, e.g.\  if the system is subjected to artificial modulation  \cite{templandscape2}, or as  described in 
\cite{WPG2013},
sometimes the transition rates on the folding process of a protein are time-dependent for natural reasons, such as the cell cycle itself. \\

Therefore, we propose the study of the dynamic generated by a master equation of the form 
\[
\frac{d}{dt}f(x,t)=\int_{K_N}[w(x,y,t)f(y,t)-w(y,x,t)f(x)]dy,
\]
where the non-necessarily symmetric function $w:K_N \times  K_N\times (0,\infty) \rightarrow (0,\infty)$ satisfies the following definition.  

\begin{defi}
    A time-dependent $p$-adic transition function $w:K_N\times K_N \rightarrow (0,\infty)$ is a function of the form 
\[w(x,y,t)=\sum_{I,J\in V} w_{I,J}(x,y,t)\Omega(p^N|x-I|_p)\Omega(p^N|y-J|_p)\]
where
\[w_{I,I}(x,y,t)=w(|x-y|_p,t)\]
are bounded radial functions and 
\[w_{I,J}(x,y,t)=w_{I,J}(t)
\]
for $I\neq J$. When $w(x,y,t)=w(x,y)$ we call this function an autonomous $p$-adic transition function. 
\end{defi}

The function $w(x,y,t)$ represents the transition rate peer unit of time,  which usually follows an Arrhenius type equation. Therefore, the above definition implies that, the transition rate between two points $x,y\in B(I)$, satisfies $w(x,y,t)=w(|x-y|_p,t)$, i.e. the dynamic depend on energy barriers which are hierarchically organized (Figure \ref{fig:fractalenergy}). Meanwhile, for $I\neq J$, the transition rate function satisfies $w(x,y,t)=w_{IJ}(t)$, that is, the transition rates between two basins  at a given time $t>0$ is constant  and independent of the points $x\in B(I)$, $y\in B(J)$.  Note that in general $w_{IJ}(t_0)\neq w_{JI}(t_0)$, hence this dynamic allow us to consider non-degenerate landscapes in contrast with past ultrametric models such as \cite{ABKO2002}. Hence we are working with a \textit{locally ultrametric energy landscape}. See figure \ref{fig:locallyultrametric} . 

\begin{figure}[H]
    \centering
    \includegraphics[width=0.8\linewidth]{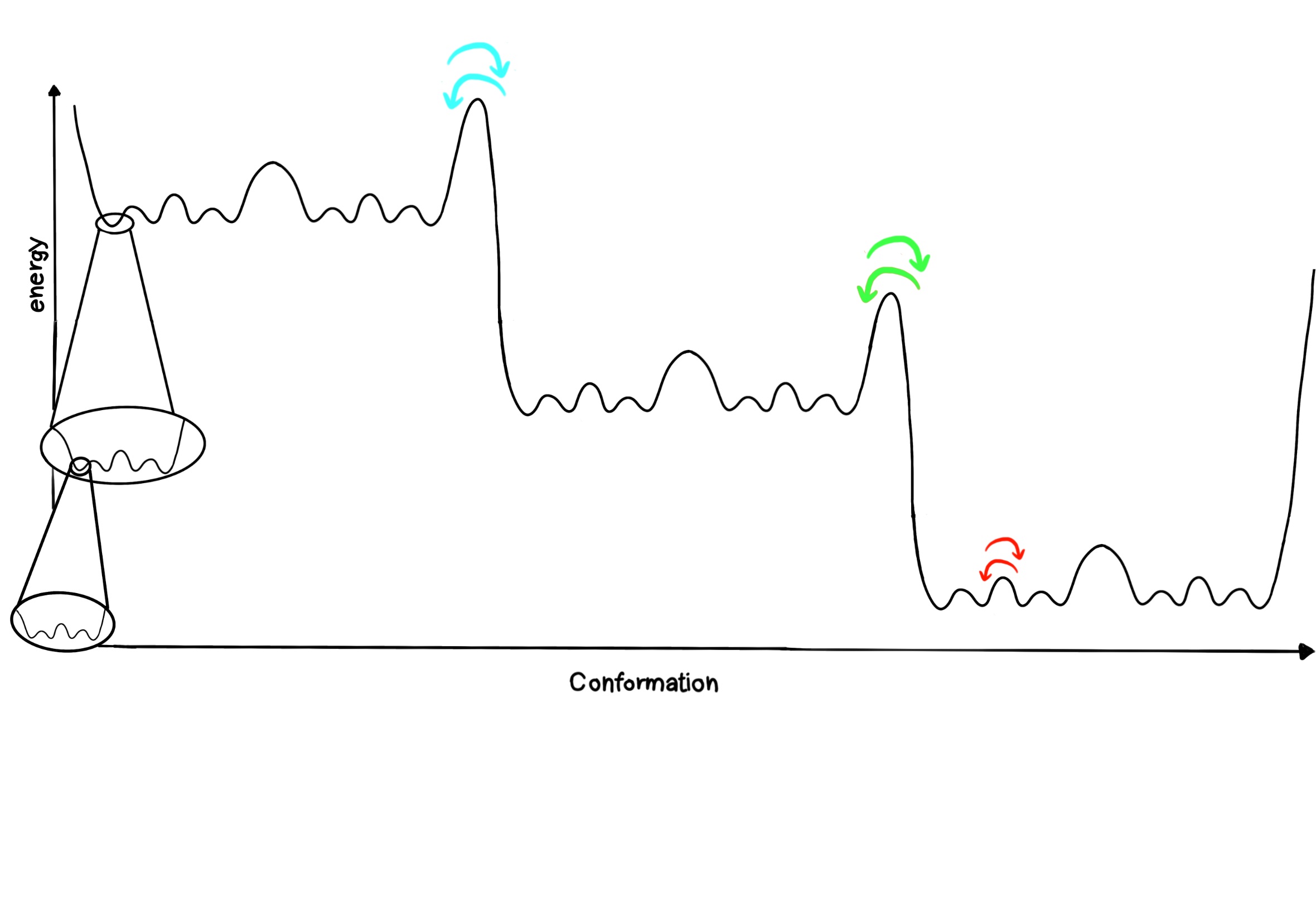}
    \caption{Locally-ultramemtric landscape: The blue and green transitions are controlled by a constant $w_{I,J}$, meanwhile inside each meta-basin (red-arrows) the transitions follow a fractal and hierarchical organization controlled by a radial function $w_{I,I}(|\cdot|_p).$}
    \label{fig:locallyultrametric}
\end{figure}

In order to study the dynamics of such a system, we introduce some important functional spaces. Let $C(K_N)$ denote the space of complex valued continuous functions on $K_N$. The space of absolute integrable functions on $K_N$ is denoted by $L^1(K_N)$ and $L^2(K_N)$ denotes the usual Hilbert space of functions over the space $K_N$. In order to simplify the computations we assume that $B_r=\mathbb{Z}_p$ and $V\subset\mathbb{Q}_p /\mathbb{Z}_p$. An orthonormal basis of the space $L^2(\mathbb{Q}_p)$ is given by the set of functions $\{\psi_{rjn}\}$ defined as 
\begin{equation}\label{KozyrevWavelet}
    \psi_{rjn}(x)=p^{\frac{r}{2}}\chi_p(p^{-1-r}jx)\Omega(|p^{-r}x-n|_p),
\end{equation}
where $r\in \mathbb{Z}$, $j\in \{1,...,p-1\}$, and $n\in \mathbb{Q}_p /\mathbb{Z}_p$ which are called \textit{the Kozyrev wavelet functions}. In particular, it can be proved that for any $f\in L^2(\mathbb{Z}_p)$, the following expansion holds,
$$f(x)=p^{1/2}\Omega(|x|_p)f_0+\sum_{rjn} C_{rjn} \psi_{rjn}(x),$$
where 
$$f_0=\int_{\mathbb{Z}_p}f(x)dx,$$
and $C_{rjn}\in \mathbb{C}$ and $\psi_{rjn}$ are the Kozyrev wavelets supported on $\mathbb{Z}_p$. Moreover the above expansion converges uniformly to $f$, for any $f\in C(\mathbb{Z}_p)$. Thus these functions form a basis of the Hilbert space $L^2(\mathbb{Z}_p)$. This basis can be generalized to a basis of $L^2(K)$, for any compact set $K\subset \mathbb{Q}_p$. Let $K:=\bigsqcup_{I\in V} I+\mathbb{Z}_p$ then the following decomposition as a direct sum of Hilbert spaces holds,
\[L^2(K)=\bigoplus_{I\in V} L^2(I +\mathbb{Z}_p),\]
then for each $f\in L^2(K)$, we have
\begin{equation}\label{expansionbasis}
    f(x)=\sum_{i=1}^{N} \left(\left(\int_{a_i+\mathbb{Z}_p}f(|x|_p)dx\right) p^{1/2}\Omega(\absolute{x-a_i})+\sum_{rjn}C_{rjn}^{(i)}\psi_{rjn}^{I}(x)\right),
\end{equation}
where $\psi_{rjn}^{I}$ denotes a Kozyrev wavelet function supported on $I+\mathbb{Z}_p$. Since 
\[ \int_{I+\mathbb{Z}_p} \psi_{rjn}^{I}(x)dx=0, \]
the latter implies the following decomposition,
\[ L^2(K)=\mathbb{C}^{|V|}\oplus \mathcal{L}_0(K), \]
where  $\mathcal{L}_0(K),$ is space of functions with mean zero on $K$, which coincides by the space generated by the functions $\psi_{rjn}^{I}(x)$ as a Hilbert space. Here we have identified the finite dimensional complex vector space $span_{\mathbb{C}}\{p^{1/2}\Omega(\absolute{x-I}_p)\}$ with $\mathbb{C}^{|V|}$.

\subsection{Time-dependent $p$-adic operators and Markov processes.}\label{nonAutoEq}
Let $w:K_N \times K_N \times [0,\infty)\rightarrow (0,\infty)$ be a time-dependent $p$-adic transition function. Then we define the time-dependent operator $\mathbf{W}$ as the function $t\mapsto \mathbf{W}(t)$. Where 
\[\mathbf{W}(t)f(x)=\int_{K_N}[w(x,y,t)f(y)-w(y,x,t)f(x)]dy.\]
Define the degree function as
\[
\gamma(t)(x)=\sum_{I\in V(\mathcal{G})}\gamma_I(t)\Omega_{I,N}(x)
\]
with 
$\gamma_{I}(t)=\int_{B_N(I)}w_{I,I}(|x|_p,t)dx +\sum_{J\in V(\mathcal{G})}w_{IJ}(t).$ Then, we can rewrite the operator as follows 
\[\mathbf{W}(t)f(x)=\int_{K}w(x,y,t)f(y)dy-\gamma(t)(x)f(x).\]
Now we will study the Cauchy problem associated with the time dependent operator $\mathbf{W}(t)$, given by:
\begin{equation}\label{nAsyst}
    \begin{cases}
      \frac{\partial u}{\partial t} (x,t)=\mathbf{W}(t)u(x,t)\\
      u(x,s)=u_s(x)\in C(K,\mathbb{C}),
    \end{cases}
\end{equation}
this initial value problem has attached an stochastic process which is described in Theorem \ref{Markov}. In order to state the following results we need some definitions. 

\begin{defi}
A continuous function $u:[s,\infty)\rightarrow C(K,\mathbb{C})$ is called a (strict) \textbf{solution} of (\ref{nAsyst}), if $u\in C^1([s,\infty),C(K,\mathbb{C}))$, $u(t)\in D(\textbf{W}(t))$ for all $t\geq s$, $u(s)=x$, and $\frac{\partial u}{\partial t}=\textbf{W}(t)u$ for $t\geq 0$.
\end{defi}
We say the problem \ref{nAsyst} is \textit{well-posed} when for any initial condition and any time $s\geq 0$ there exists a unique solution as in the above definition.  If the time-dependent operator $t\rightarrow \mathbf{W}(t)$ is strongly continuous (that is, $t\mapsto A(t)u$ is continuous for each $u\in X$), then the system \ref{nAsyst} is well-possed as stated in the next proposition. 

\begin{prop}\label{fattoriniTheorem}
Let $X$ a Banach space and for every $t>0$ let $A(t)$ be a bounded linear operator on $X$. If the function $t\mapsto A(t)$ is strongly continuous for $0\le t<T$
 then for every initial condition $x\in X$, the attached initial value problem is well-posed.
\end{prop}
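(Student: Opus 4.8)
The plan is to treat (\ref{nAsyst}) as a linear Volterra integral equation and solve it by the method of successive approximations, exactly as in the Picard--Lindel\"of scheme for ordinary differential equations, the only difference being that the coefficient is an operator-valued function rather than a scalar. First I would fix a terminal time $T'<T$ and work on the compact interval $[s,T']$. The essential preliminary step, and the one I expect to be the main obstacle, is to upgrade the hypothesis of strong continuity into a uniform operator bound $M:=\sup_{t\in[s,T']}\norm{A(t)}<\infty$. Strong continuity only gives that $t\mapsto A(t)x$ is continuous, hence bounded, for each fixed $x$; passing to a bound that is uniform in $x$ requires the Banach--Steinhaus (uniform boundedness) principle applied to the family $\{A(t)\}_{t\in[s,T']}$. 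This is the sole genuinely functional-analytic ingredient, and without it the iteration estimates below do not close.

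Granting $M<\infty$, I would rewrite the Cauchy problem in integral form,
\[
u(t)=x+\int_s^t A(\tau)u(\tau)\,d\tau,
\]
and define the Picard iterates $u_0(t)\equiv x$ and $u_{n+1}(t)=x+\int_s^t A(\tau)u_n(\tau)\,d\tau$ in the Banach space $C([s,T'],X)$ equipped with the supremum norm. A routine induction yields the factorial estimate $\norm{u_{n+1}(t)-u_n(t)}\le \norm{x}\,(M(t-s))^{n+1}/(n+1)!$, so the telescoping series converges absolutely and uniformly to a limit $u\in C([s,T'],X)$ solving the integral equation. To see that $u$ is in fact a strict solution, I would check that $\tau\mapsto A(\tau)u(\tau)$ is continuous---estimating $\norm{A(\tau)u(\tau)-A(\sigma)u(\sigma)}$ by $M\norm{u(\tau)-u(\sigma)}+\norm{A(\tau)u(\sigma)-A(\sigma)u(\sigma)}$ and using the uniform bound together with strong continuity at the fixed point $u(\sigma)$---whence the fundamental theorem of calculus shows $u\in C^1$ with $u'(t)=A(t)u(t)$ and $u(s)=x$.

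Finally, uniqueness follows from a Gr\"onwall argument: if $u$ and $v$ both solve (\ref{nAsyst}) with the same data, their difference $w=u-v$ satisfies $w(t)=\int_s^t A(\tau)w(\tau)\,d\tau$, hence $\norm{w(t)}\le M\int_s^t\norm{w(\tau)}\,d\tau$, and Gr\"onwall's inequality forces $w\equiv 0$. Since $T'<T$ was arbitrary the solution extends to all of $[s,T)$, and as $s\ge 0$ was arbitrary the initial value problem admits a unique solution for every datum, i.e.\ it is well-posed in the sense defined above.
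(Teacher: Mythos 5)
Your argument is correct: the paper itself gives no proof of this proposition but simply cites \cite[Thm.\ 7.1.1]{Fattorini1983}, and your Picard-iteration argument --- Banach--Steinhaus to obtain $M=\sup_{t\in[s,T']}\|A(t)\|<\infty$ on compact subintervals, the factorial estimate for the iterates, continuity of $\tau\mapsto A(\tau)u(\tau)$ to upgrade the mild solution to a strict one, and Gr\"onwall for uniqueness --- is essentially the standard proof found in that reference. You also correctly isolated the uniform boundedness principle as the one genuinely functional-analytic step needed to turn strong continuity into the uniform operator bound that makes the iteration estimates close.
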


\begin{proof}
\cite[Thm.\ 7.1.1]{Fattorini1983}.
\end{proof}

\begin{teo}\label{stronglyContinuous}
Let $w(x,y,t)$  be a non-autonomous $p$-adic transition function if the functions $t\mapsto w_{I,J}(t)$, and $(x,t)\mapsto w_{I,I}(|x|_p,t) $ are uniformly continuous for each $I,J\in V$, then problem \ref{nAsyst} is well-possed.
\end{teo}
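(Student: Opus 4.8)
The plan is to reduce the whole statement to Proposition \ref{fattoriniTheorem}. I would take $X = C(K,\mathbb{C})$ with the supremum norm, which is a Banach space, and verify the two hypotheses of that proposition for the family $\{\mathbf{W}(t)\}_{t\ge 0}$: that each $\mathbf{W}(t)$ is a \emph{bounded} linear operator on $X$, and that $t\mapsto \mathbf{W}(t)$ is \emph{strongly continuous}. Once both are established, Fattorini's theorem gives well-posedness of \ref{nAsyst} immediately, so no further semigroup machinery is needed.

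First I would check boundedness of each $\mathbf{W}(t)$, working from the rewriting $\mathbf{W}(t)f(x)=\int_K w(x,y,t)f(y)\,dy-\gamma(t)(x)f(x)$ and treating the two summands separately. The multiplication term is harmless: $\gamma(t)$ is a finite linear combination of the locally constant indicators $\Omega_{I,N}$, hence continuous and bounded on the compact set $K$, so multiplication by $\gamma(t)$ is bounded on $C(K,\mathbb{C})$. For the integral term, the kernel is bounded (the radial pieces $w_{I,I}(|x-y|_p,t)$ are bounded by the definition, and the off-diagonal pieces $w_{I,J}(t)$ are constants in $x,y$), so together with $\Vol(K)<\infty$ one gets the estimate $\bigl\|\int_K w(\cdot,y,t)f(y)\,dy\bigr\|_\infty \le M\,\Vol(K)\,\|f\|_\infty$.

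The one genuinely delicate point, and the step I expect to be the main obstacle, is showing that the integral term actually sends $C(K,\mathbb{C})$ (indeed all of $L^\infty$) back into $C(K,\mathbb{C})$, since the radial kernels $w_{I,I}$ are assumed only bounded, not continuous. This is exactly where the ultrametric structure must be exploited: for $x,x'\in B(I)$ with $|x-x'|_p\le p^{-k}$, the ultrametric identity $|x'-y|_p=|x-y|_p$ whenever $|x-y|_p>|x-x'|_p$ forces the integrand difference $w_{I,I}(|x-y|_p,t)-w_{I,I}(|x'-y|_p,t)$ to vanish outside the ball $\{y:|y-x|_p\le p^{-k}\}$, whose volume is at most $p^{-k}$. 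Bounding the surviving integrand by $2M\|f\|_\infty$ then yields $|g(x)-g(x')|\le 2M\|f\|_\infty\,p^{-k}\to 0$, so the image is continuous despite the kernel being merely bounded. I would isolate this as a short lemma, since it is the place where boundedness of $\mathbf{W}(t)$ on $C(K,\mathbb{C})$ really hinges on the $p$-adic geometry rather than on classical regularity.

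Finally, for strong continuity I would fix $f\in C(K,\mathbb{C})$ and estimate $\|\mathbf{W}(t)f-\mathbf{W}(t_0)f\|_\infty$ by splitting it into the kernel-difference integral and the $\gamma$-difference multiplication. On the diagonal blocks, uniform continuity of $(x,t)\mapsto w_{I,I}(|x|_p,t)$ makes $|w_{I,I}(|x-y|_p,t)-w_{I,I}(|x-y|_p,t_0)|$ uniformly small; on the off-diagonal blocks, uniform continuity of $t\mapsto w_{I,J}(t)$ does the same; integrating over the finite-volume set $K$ and summing over the finitely many pairs $(I,J)$ then controls the first piece uniformly in $x$. The same two hypotheses make each coefficient $\gamma_I(t)=\int_{B_N(I)}w_{I,I}(|x|_p,t)\,dx+\sum_J w_{IJ}(t)$ continuous in $t$ (uniform convergence of the integrand lets one pass the limit inside the integral), which controls the multiplication piece. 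Hence $t\mapsto \mathbf{W}(t)f$ is continuous for every $f$, the hypotheses of Proposition \ref{fattoriniTheorem} are satisfied, and problem \ref{nAsyst} is well-posed.
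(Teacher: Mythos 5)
Your proposal is correct and follows the same overall route as the paper: both reduce the theorem to Proposition \ref{fattoriniTheorem} by establishing continuity of $t\mapsto\mathbf{W}(t)$, using the uniform continuity of $t\mapsto w_{I,J}(t)$ and $(x,t)\mapsto w_{I,I}(|x|_p,t)$ to make the kernel difference and the degree-function difference uniformly small, and then integrating over the finite-volume set $K$. The differences are in the functional-analytic bookkeeping, and they favor your version. The paper carries out its estimate in the $L^2(K_N)$ norm (obtaining, as it remarks afterwards, continuity in the uniform operator topology on $L^2$), even though the Cauchy problem \ref{nAsyst} is posed on $C(K,\mathbb{C})$; you work directly in the supremum norm on $C(K,\mathbb{C})$, which is what the hypothesis of the cited theorem literally requires. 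More substantively, you verify that each $\mathbf{W}(t)$ is a bounded operator \emph{on} $C(K,\mathbb{C})$ --- in particular that the integral term maps bounded functions back into continuous ones even though the radial kernels $w_{I,I}$ are only assumed bounded. Your lemma for this, using the ultrametric identity $|x'-y|_p=|x-y|_p$ whenever $|x-x'|_p<|x-y|_p$ to localize the kernel difference to a ball of volume $|x-x'|_p$, is correct and is exactly the point where the $p$-adic geometry does work that classical regularity of the kernel would otherwise have to do; the paper leaves this step implicit. So your write-up is a somewhat more careful rendering of the same argument rather than a different proof.
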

\textit{Proof.} 
In virtue of proposition \ref{fattoriniTheorem} its enogh to prove that the time-dependent operator  $t\mapsto \mathbf{W}(t)$ is strongly continuous. Let $u\in L^2(K_N,\mathbb{C})$ , we want to estimate the following norm,
\[||\mathbf{W}(t)u-\mathbf{W}(s)u||_{L^2}\]
For $\varepsilon>0$ and every $x\in K$, there is a positive real number $\delta>0$ such that 
\[
|t-s|<\delta \implies|w_{IJ}(t)-w_{IJ}(s)|<\frac{\varepsilon}{n^2},
\]
and
\[
|t-s|<\delta \implies|w_{II}(|x|_p,t)-w_{II}(|x|_p,s)|<\frac{\varepsilon}{n^2},
\]
Hence
\begin{equation*}
    \begin{split}
        |w(x,y,t)-w(x,y,s)|\leq&\ |w_{II}(|x|_p,t)-w_{II}(|x|_p,s)|+\sum_{I\neq J}|w_{IJ}(t)-w_{IJ}(t)|\\
        &\leq \varepsilon.
    \end{split}
\end{equation*}
Therefore
\begin{equation*}
    \begin{split}
        \int_{K_N} \left(\int_{K_N}|(x,y,t)-A(x,y,s)||u(y)|dy\right)^2dx &\leq \varepsilon^2 \int_{K_N}\left(\int_{K_N}|u(y)|dy\right)^2dx\\
        &\leq \varepsilon^2\Vol(K_N)||u||_{L^2}^2,
    \end{split}
\end{equation*}
where the last inequality is due to Hölder's inequality. 
On the other hand, 
\begin{align*}
|\gamma(t)(x)-\gamma(s)(x)|&=\left|\sum_{I\in
V(\mathcal{I})}(\gamma_I(t)-\gamma_I(s))\Omega_{I,N}(x)\right|
\\
&\leq\sum_{I}\left|w_{II}(|x|_p,t)-w_{II}(|x|_p,s)+\sum_{J\neq I}w_{IJ}(t)-w_{IJ}(s)\right|
\\
&\leq \varepsilon
\end{align*}
Hence, 
\begin{equation*}
    \begin{split}
         \int_{K}|\gamma(t)(x)-\gamma(t)(x)|^2|u(y)|^2dx
        \leq \varepsilon^2 ||u||_{L^2}^2
    \end{split}
\end{equation*}
Finally, we have 
\begin{align*}
||\mathbf{W}(t)u-\mathbf{W}(s)u||_{L^2}
\leq \varepsilon||u||_{L^2}\left(\Vol(K_N)^\frac12+1\right)  
\end{align*}
 This ends the proof. \qed \newline
 
\begin{rem}
  This result and the following ones generalize the results presented in   \cite{nonAutonomousDiffusion}, here the transition rates inside each ball (or each metabasin) are not zero (compare with \cite{nonAutonomousDiffusion}), that is, here we take in account variable rates which model the transitions between states inside each basin. Therefore, the dynamic is fundamentally different, this results are also different from the Zúñiga model presented in \cite{ZunigaNetworks2}. On the other hand, our previous work \cite{nonAutonomousDiffusion}, can be considered the time-dependent generalization of \cite{ZunigaNetworks}, which may be useful to study Turing patterns on  time-changing graphs.
\end{rem}

In particular the proof of Theorem \ref{stronglyContinuous} shows that $t\mapsto \mathbf{W}(t)$ is continuous in the uniform operator topology. 

As stated before, the master equation \ref{nAsyst}, has attached an stochastic process. This process is a strong Markov process (in fact a Hunt process)  (for definitions we refer the reader to \cite{vanCasteren2011}). The solution of the master equation \ref{nAsyst} can be given in terms of an evolution family, $(P(t,s))_{t\geq s}$ of linear operators \cite{Schnaubelt2006}. This family is defined by 
\begin{equation}
    \label{semigroupsol}
P(t,s)u(x)=u(x,t)
\end{equation}
where $u(x,t)$ is the solution of \ref{nAsyst} for the initial condition $u(x,s)=u(x)$.  \newline
There is a natural correspondence between Feller evolution families and transition probability functions of non-homogeneous Markov processes \cite[Thm.\ 2.9]{vanCasteren2011}. For the sake of completeness we review the definition of a Feller Evolution. 

\begin{defi}
\label{FellerEvolution}
A family $\{P(s,t):0\leq s \leq t \leq T\}$ of operators defined on $L^{\infty}(K)$ is called a \textbf{Feller Evolution} on $C_b(K)$ if it possesses the following properties: 
\begin{enumerate}
    \item It leaves $C_b(K)$ invariant: $P(s,t)C_b(K)\subset C_b(K)$ for $0\leq s\leq t \leq T$;
    \item It is an evolution: $P(\tau,t)=P(\tau,s)P(s,\tau)$ for all $\tau, s, t$ for which $0\leq \tau \leq s \leq t$ and $P(t,t)=I$, $t\in [0,T]$;
    \item If $0\leq f \leq 1$, $f\in C_b(K)$, then $0\leq P(s,t)f \leq 1$, for $0\leq s \leq t\leq T$. 
    \item If the function $(s,t,x)\mapsto P(s,t)f(x)$ is continuous on the space $\Lambda:=\{(s,t,x)\in [0,T]\times[0,T]\times E:s\leq t\}$. 
\end{enumerate}
\end{defi}
As expected, the evolution family \ref{semigroupsol} is a Feller evolution. For this, we need the next result which proof can be found in \cite{nonAutonomousDiffusion}. 

\begin{prop}\label{contractionSG}
Let $\{A(t)\}_{t\geq 0}$ be a set of bounded operators on $C(X)$, where $X$ is a Banach space. Suppose that $t\mapsto A(t)$ is continuous in the uniform norm topology and that each $A(t)$ generates a strongly continuous, positive, contraction semi-group. Then the Cauchy problem 
\begin{equation}\label{nACP_abstract}
    \begin{cases}
      \frac{\partial u}{\partial t} (t)=A(t)u(t)\\
      u(s)=x\in X
    \end{cases}
\end{equation}
is well-posed and its solution evolution family $P(t,s)$, given generates a Feller Evolution.
\end{prop}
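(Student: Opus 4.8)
The plan is to split the statement into its two claims: well-posedness, which is essentially a citation, and the Feller property of the solution family, which carries the real content. First I would obtain well-posedness directly from Proposition \ref{fattoriniTheorem}. Continuity of $t\mapsto A(t)$ in the uniform operator norm is stronger than strong continuity, so Fattorini's hypotheses hold and, for each $s\ge 0$ and each $x\in C(X)$, problem \eqref{nACP_abstract} admits a unique solution $u\in C^1([s,\infty),C(X))$. Setting $P(t,s)x:=u(t)$ defines a two-parameter operator family. Then $P(t,t)=I$ is immediate, the invariance $P(t,s)C(X)\subset C(X)$ holds because the solution stays in $C(X)$, and the cocycle law $P(t,r)P(r,s)=P(t,s)$ for $s\le r\le t$ follows from uniqueness, as both sides solve the same Cauchy problem started at time $s$. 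Because each $A(t)$ is bounded and $t\mapsto A(t)$ is norm-continuous on compact intervals, $P(t,s)$ admits a norm-convergent Neumann/Picard expansion, yielding joint continuity of $(s,t)\mapsto P(t,s)$ and hence of $(s,t,x)\mapsto P(t,s)f(x)$. This already gives properties (1), (2) and (4) of Definition \ref{FellerEvolution}.

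The substantive point is property (3): that $0\le f\le 1$ implies $0\le P(t,s)f\le 1$. Here I would transfer positivity and contractivity from the frozen semigroups to the evolution family by a Trotter-type approximation. Fix $s<t$, take the uniform partition $s=t_0<\dots<t_n=t$, and set
\[
P_n(t,s):=e^{(t_n-t_{n-1})A(t_{n-1})}\cdots e^{(t_1-t_0)A(t_0)}.
\]
By hypothesis every factor $e^{\tau A(t_i)}$ is a positive contraction on $C(X)$, so each $P_n(t,s)$ is positive with $\|P_n(t,s)\|\le 1$. Writing $A_n$ for the piecewise-frozen coefficient equal to $A(t_i)$ on $[t_i,t_{i+1})$, the family $P_n$ is the exact propagator of the frozen problem, and Duhamel's identity gives
\[
P_n(t,s)-P(t,s)=\int_s^t P(t,r)\big(A_n(r)-A(r)\big)P_n(r,s)\,dr.
\]
Taking norms and using $\|P_n(r,s)\|\le 1$, the uniform bound $\sup_{s\le r\le t}\|P(t,r)\|\le M$ from the Neumann expansion, and $\int_s^t\|A_n(r)-A(r)\|\,dr\to 0$ (a consequence of the uniform continuity of $t\mapsto A(t)$), one concludes $P_n(t,s)\to P(t,s)$ in operator norm.

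Positivity and the contraction bound are closed under norm limits, so $f\ge 0$ forces $P(t,s)f=\lim_n P_n(t,s)f\ge 0$, and $\|P(t,s)\|\le 1$. Since the norm of a positive operator on $C(X)$ equals $\|P(t,s)\mathbf 1\|_\infty$, positivity yields $0\le P(t,s)\mathbf 1\le \mathbf 1$; hence for $0\le f\le 1$ we obtain simultaneously $P(t,s)f\ge 0$ and $P(t,s)f\le P(t,s)\mathbf 1\le \mathbf 1$, which is precisely property (3), completing the verification that $(P(t,s))$ is a Feller evolution. The step I expect to be the main obstacle is the convergence $P_n(t,s)\to P(t,s)$: one must justify the Duhamel comparison together with the uniform bound on $\|P(t,r)\|$ and confirm that the ordered product of frozen semigroups genuinely reproduces the non-autonomous propagator, which is exactly where norm-continuity of $t\mapsto A(t)$ is indispensable.
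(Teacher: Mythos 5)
Your argument is correct and essentially self-contained, which is more than the paper itself provides: the paper's ``proof'' of Proposition \ref{contractionSG} is a pointer to the author's earlier work, so there is no in-text argument to compare against. Your route --- well-posedness from Proposition \ref{fattoriniTheorem}, the evolution-family axioms (1), (2) and (4) of Definition \ref{FellerEvolution} from uniqueness and the norm-convergent Dyson/Neumann expansion, and the sub-Markov property (3) by approximating $P(t,s)$ in operator norm with ordered products of the frozen positive contraction semigroups via a Duhamel comparison --- is precisely the mechanism the paper packages immediately afterwards as the Trotter--Kato formula \eqref{trotterkatoformula} in Proposition \ref{uniformConvergence}, and is almost certainly the intended argument of the cited reference. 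The only steps worth spelling out in a full write-up are the backward equation $\partial_r P(t,r)=-P(t,r)A(r)$ underlying your Duhamel identity (standard for bounded, norm-continuous $A(\cdot)$, but it is what makes the comparison legitimate) and the a priori bound $\|P(t,r)\|\le e^{(t-r)\sup_\sigma\|A(\sigma)\|}$ from the Dyson series, which you need before contractivity of the limit has been established; note also that your use of the constant function $\mathbf{1}$ presupposes that the underlying space is compact, which holds in the paper's application to $C(K_N)$ even though the proposition is loosely stated for a general ``Banach space $X$''.
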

We are now ready to state the main result of this section.
\begin{teo}\label{Markov}
Let $w(x,y,t)$ be a time-dependent $p$-adic transition function such that the functions $t\mapsto w_{I,J}(t)$ and $(x,t)\mapsto w_{I,I}(|x|_p,t)$ are uniformly continuous for each $I,J\in V$. Then
there exists a probability transition function $P(t,x;s,\cdot)$, where $(t,x,s)\in [0,T]\times K \times [0,T] $, and $s\leq t$, on the Borel $\sigma$-algebra of $K$, such that the Cauchy problem 
\begin{equation*}
    \begin{cases}
      \frac{\partial u}{\partial t} (x,t)=\mathbf{W}(t)u(x,t)\\
      u(x,s)=g(x)\in C(K_N),
    \end{cases}
\end{equation*}

has a unique solution satisfying: 
\[\mathbb{E}[\varphi(X_t) | X_s\sim u(x)dx]=\int_{K_N \times K_N}\varphi(y)P(t,x;s,dy)u(x)dx=\int_{K_N}\varphi(x)u(x,t)dx\]
In addition, $P(t,x;s,\cdot)$ is the transition function of a strong Markov process.
\end{teo}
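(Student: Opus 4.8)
The plan is to deduce the theorem from the abstract correspondence between Feller evolution families and non-homogeneous Markov processes recorded in \cite[Thm.~2.9]{vanCasteren2011}, so that the genuine work is to certify that the evolution family $P(t,s)$ of \eqref{semigroupsol} is a Feller evolution in the sense of Definition \ref{FellerEvolution}. By Proposition \ref{contractionSG} this reduces, in turn, to three facts: that each $\mathbf{W}(t)$ is bounded, that $t\mapsto\mathbf{W}(t)$ is continuous in the uniform operator topology, and that each $\mathbf{W}(t)$ generates a strongly continuous, positive, contraction semigroup on $C(K_N)$. Boundedness follows from the boundedness of the radial kernels $w_{I,I}$, the finiteness of the constants $w_{I,J}(t)$, and $\Vol(K_N)<\infty$; the uniform-operator-topology continuity is exactly the statement recorded in the remark following Theorem \ref{stronglyContinuous}.

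The core step is therefore to show that each bounded operator $\mathbf{W}(t)$ generates a positive contraction semigroup. Since $\mathbf{W}(t)$ is bounded, the semigroup $e^{\tau\mathbf{W}(t)}$ exists and is automatically uniformly, hence strongly, continuous, so only positivity and contractivity remain. I would establish these through the positive maximum principle: if $f\in C(K_N)$ attains a nonnegative maximum at a point $x_0\in B(I)$, then from $w(x_0,y,t)\ge 0$ and $f(y)\le f(x_0)$ one gets $\int_{K_N}w(x_0,y,t)f(y)\,dy\le f(x_0)\int_{K_N}w(x_0,y,t)\,dy$. The degree function $\gamma$ is designed precisely so that $\int_{K_N}w(x_0,y,t)\,dy=\gamma(t)(x_0)$, namely the intra-basin integral of $w_{I,I}(|\cdot|_p,t)$ together with the inter-basin contributions $\sum_{J}w_{I,J}(t)$ weighted by the measures of the balls; hence $\mathbf{W}(t)f(x_0)\le 0$. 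For a bounded operator on $C(K_N)$ the positive maximum principle is equivalent to generating a positive contraction semigroup. Equivalently, writing $\mathbf{W}(t)=B(t)-\gamma(t)$ with $B(t)$ the positive integral operator and $c=\sup_x\gamma(t)(x)$, the factorization $e^{\tau\mathbf{W}(t)}=e^{-\tau c}\,e^{\tau(B(t)+(c-\gamma(t)))}$ realizes the semigroup as $e^{-\tau c}$ times the exponential of a positive operator, giving positivity, while conservativeness $\mathbf{W}(t)\mathbf{1}=0$ furnishes the contraction bound.

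With positivity and contractivity secured, Proposition \ref{contractionSG} yields that $P(t,s)$ generates a Feller evolution, and \cite[Thm.~2.9]{vanCasteren2011} then produces a transition probability function $P(t,x;s,\cdot)$ on the Borel $\sigma$-algebra of $K_N$ together with an associated non-homogeneous strong Markov (indeed Hunt) process $X_t$ satisfying $[P(t,s)\varphi](x)=\int_{K_N}\varphi(y)\,P(t,x;s,dy)$. The expectation identity then follows by unwinding these definitions: for an initial law with density $u$ at time $s$, Fubini's theorem gives $\mathbb{E}[\varphi(X_t)\mid X_s\sim u\,dx]=\int_{K_N}\Big(\int_{K_N}\varphi(y)\,P(t,x;s,dy)\Big)u(x)\,dx$, and the fact that $P(t,s)u=u(\cdot,t)$ solves \eqref{nAsyst} identifies this with $\int_{K_N}\varphi(x)\,u(x,t)\,dx$.

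The main obstacle I anticipate is the verification of conservativeness and of the positive maximum principle for this two-scale operator: one must track the normalisation of the balls and confirm that $\gamma(t)$ equals the total exit rate $\int_{K_N}w(x,y,t)\,dy$ at every point, matching the radial intra-basin part against the constant inter-basin part. A secondary point requiring care is that positivity and contractivity are most naturally phrased on $C(K_N)$ (or $L^\infty$), whereas the continuity estimate behind Theorem \ref{stronglyContinuous} was carried out in $L^2$; I would therefore read off the uniform-operator-topology continuity on $C(K_N)$ so that all hypotheses of Proposition \ref{contractionSG} live in the same space, and I would invoke the duality between the forward master equation and the backward transition semigroup consistently when deriving the expectation identity.
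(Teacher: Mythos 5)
Your overall architecture (Feller evolution $\Rightarrow$ transition kernel $\Rightarrow$ strong Markov process via \cite[Thm.\ 2.9]{vanCasteren2011}, with Proposition \ref{contractionSG} reducing everything to positivity and contractivity of the frozen-time semigroups) matches the paper's. But there is a genuine gap in the core step: you verify the positive maximum principle and conservativeness for the \emph{forward} operator $\mathbf{W}(t)f(x)=\int_{K_N}[w(x,y,t)f(y)-w(y,x,t)f(x)]\,dy$ itself, and this fails because the transition function is not symmetric. The killing term in $\mathbf{W}(t)$ is $\gamma(t)(x)=\int_{K_N}w(y,x,t)\,dy$ (the integral runs over the \emph{first} argument of $w$, i.e.\ it is the total exit rate from $x$), not $\int_{K_N}w(x,y,t)\,dy$ as you assert; since the model explicitly allows $w_{IJ}(t)\neq w_{JI}(t)$, these two integrals differ, so the estimate $\int_{K_N}w(x_0,y,t)f(y)\,dy\le f(x_0)\,\gamma(t)(x_0)$ breaks down, and likewise $\mathbf{W}(t)\mathbf{1}=\int_{K_N}[w(x,y,t)-w(y,x,t)]\,dy$ need not vanish. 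The forward operator conserves total mass ($\int_{K_N}\mathbf{W}(t)f\,dx=0$, i.e.\ it is contractive on $L^1$), but it does not generate a contraction semigroup on $C(K_N)$ in general, so Proposition \ref{contractionSG} cannot be applied to it as you propose.

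The paper circumvents this by running essentially your argument on the \emph{backward} (adjoint) operator $\mathbf{W}^{*}(t)u(x)=\int_{K_N}w(x,y,t)(u(y)-u(x))\,dy$, which trivially satisfies $\mathbf{W}^{*}(t)\mathbf{1}=0$ and the positive maximum principle, hence generates a Feller semigroup by Hille--Yosida and, via Proposition \ref{contractionSG}, a Feller evolution $P_{\mathbf{W}^{*}}(t,s)$ with associated transition kernel $P(t,x;s,dy)$. The forward Cauchy problem of the theorem is then tied to this process through the duality $\int_{K_N}(\mathbf{W}^{*}(t)f)\,g\,dx=\int_{K_N}f\,(\mathbf{W}(t)g)\,dx$ and the induced duality of the two evolution families, which is precisely what yields the stated expectation identity. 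You flag this duality at the very end as a point ``requiring care,'' but it is not a refinement of the argument: it is the step that makes the construction work at all in the non-symmetric case. As written, your proof is valid only under the additional hypothesis $w(x,y,t)=w(y,x,t)$, in which case $\mathbf{W}=\mathbf{W}^{*}$ and the distinction disappears.
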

\begin{proof}
   Let $t_0>0$ fixed, and let $w_0(x,y)=w(x,y,t_0)$. Consider the integral operator 
\[\textbf{W}^{*}u(x)=\int_{K_N}w_0(x,y)(u(y)-u(x))dy,\]
acting on the domain $C(K_N)$. It is easy to show that $\textbf{W}^{*}$ generates a feller semigroup by means of the Hille-Yosida theorem (see Theorem 17.11, \cite{kallenberg_foundations_2021}), nevertheless for the sake of completeness we include the proof. Since the operator is defined in all the space $C(K_N)$ we only have to show $a)$ the range of $\lambda_0-\textbf{W}^{*}$ is dense in $C(K_N)$ for some $\lambda_0>0$ and $b)$ If for $x_0\in K_N$, $f(x_0)=\sup_{x\in K_n} f(x)>0$, then $\textbf{W}^{*}f(x_0)<0$. For asertion $a)$, the result follow from the boundedness of the operator, since $||\textbf{W}^{*}||\leq \lambda_0$, for some $\lambda_0>0$ implies the existence and boundedness of the operator $(1-\frac{1}{\lambda}\textbf{W}^{*})^{-1}$, which implies $rank \ (1-\frac{1}{\lambda}\textbf{W}^{*})= C(K_N)$. By Proposition \ref{contractionSG}, the time dependent operator \[\textbf{W}^{*}(t)u(x)=\int_{K_N}w(x,y,t)(u(y)-u(x))dy,\]
generates a Feller evolution. That is, following \cite[Thm.\ 2.9]{Schnaubelt2006} there exists a strong Markov process with probability transition $P(t,x;s,dy)$, such that
\[P_{\textbf{W}^{*}}(t,s)u(x)=\int_{K_N}u(y)P(t,x;s,dy),\]
where \(P_{\textbf{W}^{*}}\) is the evolution family attached to $(\textbf{W}^{*}(t),C(K_N))$. The above equality implies that for a given probability measure $\mu(dx)$ the random variable $X_t$ satisfies:
\[\mathbb{E}[\varphi(X_t) | X_s\sim \mu(dx)]=\int_{K_N \times K_N}\varphi(y)P(t,x;s,dy)\mu(dx).\]
A straightforward computation lead to the following identity: 
\[\int_{K_N}(\textbf{W}^{*}(t)f)(x)g(x)dx=\int_{K_N}f(x)(\textbf{W}(t)g)(x)dx.\]
for every $f,g\in C(K_N)$. Moreover, the time-dependent operator $\textbf{W}^{*}(t)$ is the $L^2(K_N)$ adjoint operator of $\textbf{W}(t)$. The above equality implies the dual relation for the evolution families: if $P_{\textbf{W}}(t,s)$ is the evolution family attached to $\textbf{W}$, we have that 
\[\int_{K_N}(P_{\textbf{W}^{*}}(t,s)f)(x)g(x)dx=\int_{K_N}f(x)(P_{\textbf{W}}(t,s)g)(x)dx.\]
Since $u(x,t)=(P_{\textbf{W}}(t,s)u)(x)dx$ is the solution of the Cauchy problem, the above identity implies the desired result: 
\[\mathbb{E}[\varphi(X_t) | X_s\sim u(x)dx]=\int_{K_N \times K_N}\varphi(y)P(t,x;s,dy)u(x)dx=\int_{K_N}\varphi(x)u(x,t)dx\]
\end{proof}

\subsection{The effect of ultrametricity on the evolution process.}

So far we have studied the general properties of the stochastic process attached to time-dependent $p$-adic transition functions. We now study the effect of ultrametricity on the evolution process. In particular, we show how ultrametricity implies a simple computational description of the behavior of the dynamics, allowing us to consider a high number of states without compromising the computational complexity. In order to do that, we will express the solution family $P(t,s)$ in terms of the semigroups attached to the operators $\mathbf{W}(t_0)$. This is achieved by the well-known Trotter-Kato Theorem presented below. 

\begin{prop}\label{uniformConvergence}
Let $\mathbf{W}(t)$ be a family of strongly continuous bounded operators in $C(K)$, and let $P(t,s)$ be its respective evolution family. Then 
\begin{equation}\label{trotterkatoformula}
    \lim_{n\rightarrow \infty} \prod_{k=1}^{n}e^{t/n\mathbf{W}(s+kt/n)}f=P(s+t,s)f
\end{equation}
for all $f\in C(K)$ and uniformly for $s$ and $t$ in compact intervals of $\mathbb{R}$ and $\mathbb{R}_{+}$, respectively. 
\end{prop}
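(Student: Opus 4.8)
The plan is to compare the product $Q_n(s,t) := \prod_{k=1}^n e^{(t/n)\mathbf{W}(t_k)}$, with $t_k = s + kt/n$ and the factors ordered so that later times act on the left, against the factorization of the evolution family across the same partition, $P(s+t,s) = \prod_{k=1}^n P(t_k,t_{k-1})$, which is valid by the evolution property (item 2 of Definition \ref{FellerEvolution}). The key structural inputs are: (i) on any compact time interval the family is uniformly bounded, $M := \sup_t \norm{\mathbf{W}(t)} < \infty$, since $t \mapsto \mathbf{W}(t)$ is continuous in the uniform operator topology (as noted in the remark following Theorem \ref{stronglyContinuous}); and (ii) this same operator-norm continuity is uniform on compacta, giving a modulus of continuity $\omega(\delta) \to 0$ with $\norm{\mathbf{W}(\tau) - \mathbf{W}(\sigma)} \le \omega(|\tau-\sigma|)$. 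From (i) one obtains the uniform a priori bounds $\norm{e^{(t/n)\mathbf{W}(t_k)}} \le e^{Mt/n}$ and, via Gronwall applied to the Volterra identity $P(t,s) = I + \int_s^t \mathbf{W}(\tau)P(\tau,s)\,d\tau$, also $\norm{P(t,s)} \le e^{M(t-s)}$, so that every partial product appearing below is bounded by $e^{Mt}$.

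The heart of the argument is a one-step estimate on $\norm{e^{(t/n)\mathbf{W}(t_k)} - P(t_k,t_{k-1})}$. Writing $\Delta = t/n$ and expanding both operators to first order—the exponential by its power series and $P(t_k,t_{k-1})$ by the first two terms of its Dyson (iterated Volterra) expansion—one finds
\[
e^{\Delta \mathbf{W}(t_k)} - P(t_k,t_{k-1}) = \int_{t_{k-1}}^{t_k}\bigl[\mathbf{W}(t_k) - \mathbf{W}(\tau)\bigr]\,d\tau + R_k,
\]
where the remainder collects all second- and higher-order terms and satisfies $\norm{R_k} \le C\,\Delta^2$ for a constant $C$ depending only on $M$. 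The leading integral is controlled by the modulus of continuity: since $|t_k - \tau| \le \Delta$ on the interval of integration, $\norm{\int_{t_{k-1}}^{t_k}[\mathbf{W}(t_k)-\mathbf{W}(\tau)]\,d\tau} \le \omega(\Delta)\,\Delta$. Hence $\norm{e^{\Delta\mathbf{W}(t_k)} - P(t_k,t_{k-1})} \le \omega(\Delta)\,\Delta + C\Delta^2$.

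Finally I would insert these factors one at a time via the telescoping identity
\[
Q_n - P(s+t,s) = \sum_{k=1}^n \Bigl(\prod_{j=k+1}^n e^{\Delta\mathbf{W}(t_j)}\Bigr)\bigl[e^{\Delta\mathbf{W}(t_k)} - P(t_k,t_{k-1})\bigr]\Bigl(\prod_{j=1}^{k-1} P(t_j,t_{j-1})\Bigr).
\]
Bounding the two flanking partial products by $e^{Mt}$ each and the bracket by the one-step estimate gives
\[
\norm{Q_n - P(s+t,s)} \le e^{2Mt}\, n\,\bigl(\omega(\Delta)\Delta + C\Delta^2\bigr) = e^{2Mt}\bigl(\omega(t/n)\,t + C t^2/n\bigr),
\]
which tends to $0$ as $n \to \infty$ because $\omega(t/n) \to 0$. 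This already yields convergence in operator norm, which is stronger than the stated strong convergence; and since $M$, $C$ and $\omega$ may be chosen uniformly for $s,t$ ranging over any fixed compact intervals, the convergence is uniform there.

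The step I expect to be the main obstacle is the one-step estimate together with its accumulation: the naive bound on the sum of $n$ terms is $O(n \cdot \Delta) = O(1)$, so it is essential that the leading per-step error carry the extra factor $\omega(\Delta)$ that vanishes with the mesh—this is exactly what upgrades the telescoping sum from merely bounded to convergent. The crucial enabler is therefore the operator-norm (not merely strong) continuity of $t \mapsto \mathbf{W}(t)$ furnished by Theorem \ref{stronglyContinuous}; with only strong continuity the frozen-coefficient difference $\mathbf{W}(t_k) - \mathbf{W}(\tau)$ could not be controlled uniformly in the operator norm, and the telescoping would have to be carried out pointwise on each $f$, with a correspondingly more delicate handling of the non-commuting products.
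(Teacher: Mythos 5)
Your argument is sound and is essentially the standard proof of the result the paper invokes: the paper itself gives no argument here, simply citing \cite[Ch.\ III.5.9, Prop.]{EN2006}, so your telescoping/Chernoff-type derivation is a genuine expansion rather than a restatement of the paper's route. Comparing the two products over the same partition via the evolution property, controlling each one-step difference by $\norm{e^{\Delta\mathbf{W}(t_k)}-P(t_k,t_{k-1})}\le \omega(\Delta)\Delta+C\Delta^2$, and summing with the flanking partial products bounded by $e^{Mt}$ is exactly how such product formulas are established; what your version buys beyond the citation is quantitative: you get convergence in the \emph{operator norm} with the explicit rate $e^{2Mt}\bigl(t\,\omega(t/n)+Ct^2/n\bigr)$, which is stronger than the strong convergence asserted in \eqref{trotterkatoformula} and makes the claimed uniformity in $(s,t)$ on compacta transparent.

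The one caveat is the hypothesis you lean on. Proposition \ref{uniformConvergence} as stated assumes only that $t\mapsto\mathbf{W}(t)$ is strongly continuous, whereas your modulus of continuity $\omega$ requires continuity in the uniform operator topology. You are right that this holds for the operators actually used in the paper (the remark after Theorem \ref{stronglyContinuous} records exactly this), so your proof covers every application made of the proposition; but as a proof of the general statement it establishes a variant under a stronger hypothesis. The repair under mere strong continuity is the one you sketch at the end: fix $f$, observe that the right-hand partial products $P(t_{k-1},s)f$ range over the compact orbit $\{P(\tau,s)f:\tau\in[s,s+t]\}$, and use that $(\tau,g)\mapsto\mathbf{W}(\tau)g$ is jointly continuous (strong continuity together with the uniform bound $M$ supplied by the uniform boundedness principle), hence uniformly continuous on $[s,s+t]\times\{P(\tau,s)f\}$; this yields a vector-dependent modulus playing the role of $\omega$, the telescoping goes through pointwise, and one recovers exactly the strong convergence the proposition claims.
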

\begin{proof}
\cite[Ch.\ III.5.9, Prop.]{EN2006}.
\end{proof}
Denote by $\textbf{W}^{(0)}$ the matrix representation of $\textbf{W}(t_0)$, for a fixed $t_0>0$, acting on $\mathbb{C}^{|V|}\cong span_{\mathbb{C}}\{\Omega(\absolute{x-I}_p)\}$.  
\begin{prop}[Eigenvalue problem,]\label{somethingKnown}
    The elements of the set: 
\[
\Spec(\textbf{W}^{(0)}) \bigsqcup \{-\gamma_{I,r}: I\in V,r\leq 0 \},
\]
where
\[\gamma_{I,r}=\int_{\mathbb{Z}_p\setminus p^{r-1}\mathbb{Z}_p}w_{II}(|x|_p)dx+p^{r-1}w(p^{r})+\sum_{J \in V} w_{IJ}\]
are the eigenvalues of the operator $\textbf{W}(t_0)$, for fixed $t_0>0$. The corresponding eigenfunctions are given by the following infinite set 
\[
\left\{ \frac{\varphi_\mu}{||\varphi_\mu||_2}: \mu\in \Spec(\textbf{W}_F^{(0)})\right\}
\bigsqcup 
\left\{
 \psi_{rjn}^{I}:I\in V, j\in \{1,...,p-1\}, r\in \mathbb{Z}, r\leq 0, n\in \mathbb{Q}_p/\mathbb{Z}_p\right\},
\]
where the functions $\varphi_\mu(x)$ are defined by: 
\begin{equation}
  \varphi_\mu(x)=\sum_{J\in V(\mathcal{G})} \varphi_\mu(J)\Omega(p^{-N}|x-J|_p)
\end{equation}
and the vector $(\varphi_\mu(J))_{J\in V(\mathcal{G})}$ is an eigenvector of the matrix $\textbf{W}^{(0)}$ corresponding to $\mu\in\Spec(\textbf{W}^{(0)})$, and $ \psi_{rjn}^{I}$ are the Kozyrev functions of the form (\ref{KozyrevWavelet}). 
\end{prop}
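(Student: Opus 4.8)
The plan is to diagonalize $\mathbf{W}(t_0)$, for a fixed $t_0>0$, by exploiting the orthogonal decomposition $L^2(K)=\mathbb{C}^{|V|}\oplus\mathcal{L}_0(K)$ introduced above and showing that $\mathbf{W}(t_0)$ is block-diagonal with respect to it. Writing $w(x,y):=w(x,y,t_0)$ and using the degree form $\mathbf{W}(t_0)f(x)=\int_K w(x,y)f(y)\,dy-\gamma(x)f(x)$, the first step is to verify that $\mathbf{W}(t_0)$ preserves each summand: it sends functions that are constant on every ball $B(I)$ to functions of the same type, and it sends the mean-zero wavelet space into itself. The latter inclusion is where ultrametricity does the work, so I would isolate it as the key lemma.

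On the finite-dimensional summand I would substitute $f=\sum_{I\in V}c_I\,\Omega(p^{-N}|x-I|_p)$ into the operator. Because $w_{I,J}(x,y)=w_{I,J}$ is constant off the diagonal and the intra-ball kernel integrates a constant to a constant, $\mathbf{W}(t_0)f$ is again constant on each ball, and the induced action on the coefficient vector $(c_I)_I$ is precisely the matrix $\mathbf{W}^{(0)}$. Hence every eigenvector $(\varphi_\mu(J))_J$ of $\mathbf{W}^{(0)}$ lifts to the eigenfunction $\varphi_\mu(x)=\sum_{J}\varphi_\mu(J)\Omega(p^{-N}|x-J|_p)$ with eigenvalue $\mu$, which accounts for the block $\Spec(\mathbf{W}^{(0)})$ of the spectrum.

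The heart of the argument is to show that each Kozyrev wavelet $\psi_{rjn}^{I}$, supported in $B(I)=I+\mathbb{Z}_p$ and satisfying $\int\psi_{rjn}^{I}=0$, is an eigenfunction with eigenvalue $-\gamma_{I,r}$. I would split the computation of $\mathbf{W}(t_0)\psi_{rjn}^{I}(x)$ according to whether $x$ lies in $B(I)$. If $x\in B(J)$ with $J\neq I$, then $\psi_{rjn}^{I}(x)=0$ and, since $w(x,y)=w_{J,I}$ is constant in $y\in B(I)$, the integral $\int_{B(I)}w_{J,I}\psi_{rjn}^{I}(y)\,dy$ vanishes by the mean-zero property, so the value is $0$, consistent with the eigenfunction being supported in $B(I)$. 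If $x\in B(I)$, only the diagonal radial kernel contributes (again by mean-zero cancellation of the off-diagonal constant kernels), reducing the problem to the classical fact that a radial integral operator $f\mapsto\int_{\mathbb{Z}_p}w(|x-y|_p)f(y)\,dy$ is diagonalized by the Kozyrev wavelets. Carrying out the sphere-by-sphere integration—using that $w(|x-y|_p)$ is constant on each sphere $|x-y|_p=p^{-k}$ and that $\int\chi_p(\cdots)$ cancels on the spheres finer than the support of $\psi_{rjn}^{I}$—yields the convolution eigenvalue $\mu_r=\int_{p^{r-1}\mathbb{Z}_p}w_{II}(|x|_p)\,dx-p^{r-1}w(p^{r})$. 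Combining with the degree term $-\gamma_I\psi_{rjn}^{I}(x)$ and simplifying $\mu_r-\gamma_I$ produces exactly $-\gamma_{I,r}$.

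Finally, I would note that the exhibited eigenfunctions are complete: the $\varphi_\mu$ span $\mathbb{C}^{|V|}$ and the $\psi_{rjn}^{I}$ span $\mathcal{L}_0(K)$, so together they form an orthogonal basis of $L^2(K)$; since $\mathbf{W}(t_0)$ is block-diagonal and each block has been fully diagonalized, the listed set is the entire spectrum. The main obstacle is the eigenvalue computation for the radial kernel (the sphere decomposition and the character cancellation giving $\mu_r$); this is essentially Kozyrev's spectral analysis and can be cited, while the genuinely new bookkeeping is verifying that the non-symmetric off-diagonal couplings $w_{I,J}$ annihilate every mean-zero wavelet, which is precisely what decouples the intra-metabasin spectrum from the finite matrix block $\mathbf{W}^{(0)}$.
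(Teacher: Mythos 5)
Your proposal is correct and follows essentially the same route as the paper: decompose $L^2(K_N)=\mathbb{C}^{|V|}\oplus\mathcal{L}_0(K_N)$, observe that the finite block is by definition the matrix $\mathbf{W}^{(0)}$, and show each Kozyrev wavelet is an eigenfunction by combining the mean-zero cancellation of the off-diagonal couplings with the known diagonalization of radial kernels on $\mathbb{Z}_p$ (which the paper likewise cites rather than rederives). You spell out the block-invariance and the case $x\notin B(I)$ more explicitly than the paper does, but the substance is identical.
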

\begin{proof}
Since the space $L^2(K_N)$ admit the following decomposition $L^2(K_N)=\mathbb{C}^{N}\oplus \mathcal{L}_0(K),$ we have only to show that every wavelet of the form $\psi^{I}_{rjn}$is an eigenfunction of the operator $\textbf{W}(t_0)$. Indeed, notice that 
\[\textbf{W}(t_0)\psi^{I}_{rjn}(x)=1_{I+\mathbb{Z}_p}(x)\int_{K_N}w_{I,I}(|x-y|_p)(\psi^{I}_{rjn}(y)-\psi^{I}_{rjn}(x))dy-\psi^{I}_{rjn}(x)\sum_{J \in V} w_{IJ}.\]
Define $\hat{\gamma}_{I,r}=\int_{\mathbb{Z}_p\setminus p^{r-1}\mathbb{Z}_p}w_{II}(|x|_p)dx+p^{r-1}w(p^{r})$, then it is known (see for example \cite{ZunigaNetworks2}), that 
\[\int_{K_N}w_{I,I}(|x-y|_p)(\psi^{I}_{rjn}(y)-\psi^{I}_{rjn}(x))dy=-\hat{\gamma}_{I,r}\psi_{rjn}^{I}(x).\]
Therefore, $W_F\psi^{I}_{rjn}(x)=-\gamma_{I,r}\psi^{I}_{rjn}(x)$, as wanted. 
\end{proof}

Let $t_0\geq 0$ be a fixed positive real number. Then the operator $\mathbf{W}(t_0)$ act on the space $L^2(K_N)$ as a direct sum. As shown in the section 2.1 we have the decomposition 
\[L^2(K_N)=\mathbb{C}^{|V|}\oplus \mathcal{L}_0(K_N).\]
For any $f\in L^2(K_N)$ denote by $\hat{f}$ the projection of $f$ in the space $\mathbb{C}^{|V|}$. Then, by \ref{expansionbasis} we have

\[f(x)=\hat{f}+\sum_{\supp(\Psi_{j,I,r}) \subset K_N} C_{j,I,r}\psi_{j,I,r}.\]
The operator $\mathbf{W}(t_0)$ act diagonally on the Kozyrev basis $\psi_{j,I,r}$ as shown in Proposition  \ref{somethingKnown}.   Therefore, the evolution family can be expressed as
\[P(t+s,s)f(x)=\hat{P}(t+s,s)\hat{f}(x)+P(t+s,s)\left(\sum_{\supp(\Psi_{j,I,r}) \subset K_N} C_{j,I,r}\psi_{j,I,r}(x)\right). \]
Where $\hat{P}(t,s)$ is the evolution family attached to the time-dependent matrix $\textbf{W}(t)|_{\mathbb{C}^{|V|}}$. For finite systems, the computation of  $\hat{P}(t,s)$ is a well know matter, for example, numerical methods like the Dyson expansion or the Trotter-Kato formula can be used to approximate $\hat{P}(t,s) $ i.e. the solution of 
\[\frac{d}{dt} u(t) = \textbf{W}(t)|_{\mathbb{C}^{|V|}}, \quad u(0) = u_0\in \mathbb{C}^{|V|}.\]
Solving this equation analytically could be very difficult, and the general case of two basins, that is, when the size of $\textbf{W}(t)|_{\mathbb{C}^{|V|}}$ is two, is already not trivial. However, when the transitions between states are ultrametric, the ultrametric part of the solution
\[P(t+s,s)\left(\sum_{\supp(\Psi_{j,I,r}) \subset K_N} C_{j,I,r}\psi_{j,I,r}(x)\right),\]
has an analytic closed form as shown in the next Theorem. \\

\begin{teo}\label{CauchyProblem}
Let $w(x,y,t) $ be a time-time dependent $p$-adic transition function satisfying the hypothesis of Theorem 2. Then the evolution family of the Cauchy problem (\ref{nAsyst}) (with initial condition at $s$) is given by 
\begin{align*}
P(t+s,s)u_0(x)&=\hat{P}(t+s,s)\hat{u}_0(x)
\\
&+\sum_{\supp(\Psi_{j,I,r})\subset K_N} C_{r,I,j}(s) e^{-\int_{s}^{t}\gamma_{I,r}(\tau)d\tau} \psi_{j,I,r}(x),
\end{align*}
where $t\ge s$, and the coefficients $C_{r,I,j}(s)$ are uniquely determined by the initial condition.
\end{teo}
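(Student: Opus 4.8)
The plan is to exploit the orthogonal decomposition $L^2(K_N)=\mathbb{C}^{|V|}\oplus\mathcal{L}_0(K_N)$ and to show that it is preserved by the entire evolution family, after which the two blocks can be treated separately. First I would check that for each fixed $t$ the operator $\mathbf{W}(t)$ leaves both summands invariant. On the mean-zero part this is exactly the content of Proposition \ref{somethingKnown}: the computation carried out there, with $w_{I,I}(|x|_p)$ and $w_{I,J}$ replaced by their time-dependent versions, shows that each Kozyrev wavelet is an eigenfunction,
\[
\mathbf{W}(t)\,\psi^{I}_{rjn}=-\gamma_{I,r}(t)\,\psi^{I}_{rjn},\qquad
\gamma_{I,r}(t)=\int_{\mathbb{Z}_p\setminus p^{r-1}\mathbb{Z}_p}w_{II}(|x|_p,t)\,dx+p^{r-1}w(p^{r},t)+\sum_{J\in V}w_{IJ}(t),
\]
so $\mathbf{W}(t)$ maps $\mathcal{L}_0(K_N)$ into itself. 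A short direct computation, evaluating $\mathbf{W}(t)\Omega(|\cdot-I|_p)$ on each ball $B(J)$ and using that $w(x,y,t)=w_{JI}(t)$ is constant there, shows that $\mathbf{W}(t)$ sends every ball-indicator to a function that is again constant on each ball, i.e. it leaves $\mathbb{C}^{|V|}$ invariant as well. The crucial structural fact is that the eigenbasis $\{\psi^{I}_{rjn}\}$ of the mean-zero part is \emph{the same} for every $t$; only the eigenvalues $-\gamma_{I,r}(t)$ vary in time.

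Next I would promote this fixed-time invariance to the evolution family using the Trotter--Kato product formula of Proposition \ref{uniformConvergence}. Each factor $e^{(t/n)\mathbf{W}(s+kt/n)}$ respects the decomposition because $\mathbf{W}(s+kt/n)$ does, hence so does every finite product and, passing to the limit, so does $P(t+s,s)$. This yields a splitting $P(t+s,s)=\hat P(t+s,s)\oplus P_0(t+s,s)$, where $\hat P$ acts on $\mathbb{C}^{|V|}$ and $P_0$ on $\mathcal{L}_0(K_N)$. On the finite-dimensional block there is nothing further to prove: $\hat P(t+s,s)$ is by definition the evolution family of the matrix Cauchy problem $\dot u=\mathbf{W}(t)|_{\mathbb{C}^{|V|}}u$, and applying it to $\hat u_0$ produces the first term of the asserted formula.

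The heart of the argument is the mean-zero block. Fixing a single wavelet $\psi^{I}_{rjn}$, because it is a common eigenvector of \emph{all} the $\mathbf{W}(s+kt/n)$, each operator exponential acts on it as a scalar, and the time-ordered product in \eqref{trotterkatoformula} collapses to an ordinary product of numbers,
\[
\prod_{k=1}^{n}e^{(t/n)\mathbf{W}(s+kt/n)}\psi^{I}_{rjn}
=\exp\!\Bigl(-\tfrac{t}{n}\sum_{k=1}^{n}\gamma_{I,r}(s+kt/n)\Bigr)\,\psi^{I}_{rjn}.
\]
The exponent is a Riemann sum for $\int_{s}^{s+t}\gamma_{I,r}(\tau)\,d\tau$, which converges because $\gamma_{I,r}$ is continuous; this is where the uniform-continuity hypotheses inherited from Theorem \ref{stronglyContinuous} enter. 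Hence $P_0(t+s,s)\psi^{I}_{rjn}=e^{-\int_{s}^{s+t}\gamma_{I,r}(\tau)\,d\tau}\psi^{I}_{rjn}$, i.e. each wavelet coefficient solves the decoupled scalar ODE $\dot c(t)=-\gamma_{I,r}(t)\,c(t)$. Expanding the initial datum as $u_0=\hat u_0+\sum_{\supp(\Psi_{j,I,r})\subset K_N}C_{r,I,j}(s)\,\psi^{I}_{rjn}$ via \eqref{expansionbasis} identifies the coefficients $C_{r,I,j}(s)$ as the (uniquely determined) wavelet coefficients of $u_0$, and summing the contributions gives the stated closed form.

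The main obstacle is the interchange of the Trotter--Kato limit with the possibly infinite wavelet expansion: the collapse to scalars is clean on each individual mode, but one must justify applying $P_0(t+s,s)$ term by term to the series $\sum C_{r,I,j}(s)\psi^{I}_{rjn}$. I would handle this by recalling that for $u_0\in C(K_N)$ the wavelet expansion converges uniformly (as noted after \eqref{KozyrevWavelet}), and that the approximating products $\prod_{k}e^{(t/n)\mathbf{W}(\cdot)}$ are uniformly bounded contractions on $C(K_N)$; the uniform convergence on compact time intervals asserted in Proposition \ref{uniformConvergence}, together with a dominated-convergence argument, then lets me pass the limit through the sum and conclude.
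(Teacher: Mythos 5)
Your proposal is correct and follows essentially the same route as the paper's own proof: both rest on the Trotter--Kato formula of Proposition \ref{uniformConvergence} together with the fact that the Kozyrev wavelets are common eigenfunctions of all the operators $\mathbf{W}(t)$, so each product of exponentials collapses to a scalar whose exponent is a Riemann sum converging to $-\int_{s}^{s+t}\gamma_{I,r}(\tau)\,d\tau$. Your treatment is in fact somewhat more careful than the paper's, which simply reduces to the case $\hat{u}_0=0$ and leaves implicit both the invariance of the decomposition under the evolution family and the interchange of the Trotter--Kato limit with the wavelet expansion that you justify at the end.
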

\begin{proof}
    Without loss of generality, we assume $s=0$ and $\hat{u}_0=0$. Let $n\in \mathbb{N}$. Then, since for all $t_1>0$ and $t_2>0$, the operator $e^{t_1\mathbf{W}(t_2)}$ acts diagonally on the Kozyrev basis, we have:  
    \begin{equation*}
    \begin{split}
        e^{t_0/n\mathbf{W
        }(t_0)}&\cdot e^{t_0/n\mathbf{W}(t_0-t_0/n)}\dots e^{t_0/n\mathbf{W}(t_0/n)}u_0(x)\\&=\sum_{\supp(\Psi_{j,I,r})\subset K_N}C_{r,I,j}(t_0/n)e^{-\frac{t_0}{n}\sum_{k=1}^n\gamma_{I,r}(k\frac{t_0}{n})}\psi_{j,I,r}(x),
    \end{split}
\end{equation*}
Notice that, this is a consequence of the fact that each semigroup act diagonally in a same basis over all times. The result follows by taking the limit when $n\rightarrow \infty$ in virtue of equation \ref{trotterkatoformula}.
\end{proof}

\section{A two basin model and characteristic relaxation.}

In this section, we will study a simple model that consists of two meta-basins each of one consints on sub-basins parametrized $p$-adically. This model can be seen as a time-dependent generalization of the minimalist model of Mauro \cite{Mauro2012Minimalist}. First, we develop further consequences on the Trotter-Kato formula, we then compute explicitly the relaxation process  for an initial distribution on a region $B_{r_0}$, inside the first meta-basin, corresponding, in the case where protein folding data is used, to the unfolded meta-basin, and in the case of glass relaxation parameters with the higher energy meta-basin. \newline

Each of the meta-basins contains, as has already been discussed, an infinitesimal infinity of states (which is an approximation of a continuous-time Markov chain. with a large number of discrete states). The transition functions between the states within each basin are governed by two radial functions, called $w_U$ and $w_N$. The basins will be denoted by $B_U$ and $B_F$, which are two $p$-adic balls that we will assume have a radius equal to $1$. The transitions between $B_U$ and $B_F$ will be governed by the time dependent coefficients \( W_{U\rightarrow F}(t)\) and \( W_{U\rightarrow F}(t) \), respectively.  Let $B_{r_0}\subset B_U$ be a small ball inside the unfolded basin. We aim to describe the characteristic relaxation of this particular region. where, a relaxation process will be understood as the evolution of population (or occupation probability) in the domain of the initial distribution; in this case $B_{r_0}$. 

For this, define the initial condition as $u_0(x)=\frac{1}{p^{-r_0}}1_{B_{r_0}}(x)$. Then, if $u(x,t)$ is the solution of the attached master equation, we will compute $S(t)= \langle p^{-r_0}u_0(x),u(x,t)\rangle$. The initial condition $u_0(x)$ has an expansion on the corresponding eigenbasis of the form
\[u_0(x)=\hat{u_1}(x)+\sum_{Supp \ \psi_{r,j,n} \not  \subset  B_U}C_{r,j,n}\psi_{r,j,n}(x),\]
where $\hat{u}_1(x)=1_{B_{U}}(x)$, and the sum of the right is finite. Note the projection $\hat{u}_1$ follows the master equation attached to the rate matrix $W(t)=[W_{I\rightarrow J}(t)]$.  The solution of the initial value problem \ref{nAsyst} has the form
\[u(x,t)=\hat{u}_1(x,t)+\hat{u}_2(x,t)+\sum_{Supp \ \psi_{r,j,n} \not  \subset  B_U} C_{r,j,n}(s) e^{-\int_{s}^{t}\gamma_{r,U}(\tau)d\tau} \Psi_{r,j,n}(x)\]
where $\hat{u}_1(x,t)=p_1(t)1_{B_{U}}(x)$ and  $\hat{u}_2(x,t)=p_2(t)1_{B_{F}}(x)$. Since $\int_{B_U \cup B_F} u(x,t)=1$ for all times, we have that 
\begin{equation}\label{conservation}
    p_1+p_2=1.
\end{equation}
The evolution of $\hat{u}_1$ follows the master equation attached to $W(t)$:

\[
\frac{dp_1(t)}{dt} = W_{U\rightarrow F}(t)\, p_2(t) - W_{F\rightarrow U}(t)\, p_1(t)
\]
By the conservation condition \ref{conservation}, we derive the equation 

\[
\frac{dp_1(t)}{dt} + \lambda(t)\, p_1(t) = W_{U\rightarrow F}(t)
\]
where $\lambda(t) = W_{U\rightarrow F}(t) + W_{F\rightarrow U}(t) $. This equation can be solved in terms of its integrating factor \(
\mu(t) = \exp\left( \int_{0}^{t} \lambda(\tau)\, d\tau \right)
\), and its given by 
\begin{equation}\label{clasicprob}
    p_1(t) = \frac{1}{\mu(t)} \left( 1 + \int_0^t \mu(s)\,W_{U\rightarrow F}(s)\, ds \right).
\end{equation}
Its worth to mention this gives the solution for the time-dependent version of the classical two-state reaction model. The time-independent case is well known and highly used in transition models and can be solved directly; see for example,\cite{Nolting2005Protein}. On the other hand, the eigenvalues attached to the wavelets $\psi_{r,j,n}(x)$ supported in the basin $B_U$ are given by  
\[-\gamma_{r,U}(t)=-(1-p^{-1})\sum_{j=0}^{-r}p^{-j}w_U(p^{-j},t)-p^{r-1}w_U(p^{r},t)-W_{U\rightarrow F}(t),\]
The solution is 
\[\begin{split}
    u(x,t)= p_1(t)1_{B_{U}}(x)+p_2(t)1_{B_{F}}(x)+\sum_{Supp \ \psi_{r,j,n} \not  \subset  B_U}C_{r,j,n}e^{-\int_{0}^{t}\gamma_{r,U}(\tau)d\tau}\psi_{r,j,n}(x)
\end{split} . \]
Therefore, the characteristic relaxation is given by
\begin{equation}
    \label{relaxtime}
    \begin{split}
   S(t)=  p^{-r_0}p_1(t)+p^{-r_0}\sum_{Supp \ \psi_{r,j,n} \not  \subset  B_U}|C_{r,j,n}|^2e^{-\int_{0}^{t}\gamma_{r,U}(\tau)d\tau}
\end{split} 
\end{equation}
Now, the objective is to analyze the long-term behavior of our system in two possible scenarios. The first involves having constant energy barriers with variable temperature. The second involves having an approximately constant temperature with time-dependent energy barriers.

For the sake of definiteness, we will use the standard Arrhenius relation, so the radial functions are defined by the relation:

\[
w_I(|x|_p, t) = W_I \exp\left( -\frac{U(|x|_p, t)}{k_B T(t)} \right),
\]

where \( I \in \{U, F\} \), \( U(|x|_p) \) is the height of the activation barrier for the transition from state \( y \) to state \( x \), \( k_B \) is the Boltzmann constant, and \( T \) is the temperature. Similarly, the transitions between the basins \( B_U \) and \( B_F \) are determined by the relation

\[
W_{I \to J}(t) = W_{I,J} \exp\left( -\frac{\Delta_{I \to J}U(t)}{k_B T(t)} \right).
\]

The energy barriers inside each basin are parameterized by the $p$-adic radial function.
The heights of the energy barriers may depend on time, as, for example, inside a cell during its life cycle due to interactions with its environment \cite{WPG2013}, where the barriers between the unfolded state and the folded state increase in time during the transition of the interpahse to mitosis.

\subsection{Time-dependent energy barriers and temperature}
Our next goal is to describe the behavior of the probability \ref{clasicprob} and \ref{relaxtime} in two scenarios. First we assume a constant energy landscape while the temperature decreases simulating a cooling using the parameters used in a  glass relaxation model proposed in \cite{Mauro2012Minimalist}. We propose this example since it is well known that the master equation attached to glass relaxation usually depends on a temperature path $T(t)$. On the other hand, motivated by the studies on protein folding and its dependence on temperature, we use the temperature dependent transitions given in \cite{Guo2012Temperature}, to model a protein folding dynamic assuming a linear increase in temperature. 
\subsection{Anomalous relaxation caused by fast cooling}
Assume the energy barriers between the two basins $B_U$ and $B_F$ are time independent:  $\Delta_{I\rightarrow J}U(t)=\Delta_{I\rightarrow J}U$. We now make some general observations of \ref{clasicprob} based on the theory developed in this manuscript. We can rewrite $p_1(t)$ as 
\[p_1(t)=\frac{1}{\mu(t)}+\int_{0}^{t}exp\left(-\int_{s}^{t}\lambda(\tau)d\tau\right)W_{U\rightarrow F}(s)ds,\]
by performing the change of variables $u=t-s$ we obtain
\[p_1(t)=\frac{1}{\mu(t)}+\int_{0}^{t}exp\left(-\int_{t-u}^{t}\lambda(\tau)d\tau\right)W_{U\rightarrow F}(t-u)du. \]
For $u$ sufficiently small (and therefore small $t$), we can take the first order Taylor expansion approximation of $-\int_{t-u}^{t}\lambda(\tau)d\tau$ and $W_{U\rightarrow F}(t-u)$ as functions of $u$. We obtain the following expression 
\[\begin{split}p_1(t)=\frac{1}{\mu(t)}+
    &\frac{W_{U\rightarrow F}(t)}{W_{U\rightarrow F}(t) + W_{F\rightarrow U}(t)}(1-e^{-\lambda(t)t})\\&-\frac{W_{U\rightarrow F}'(t)}{(W_{U\rightarrow F}(t)+ W_{U\rightarrow F}(t))^2}(1-e^{-\lambda(t)t})(1+\lambda(t)t))\end{split}.\]
\begin{figure}[H]
\label{temperaturepath}
    \centering
    \includegraphics[width=\textwidth]{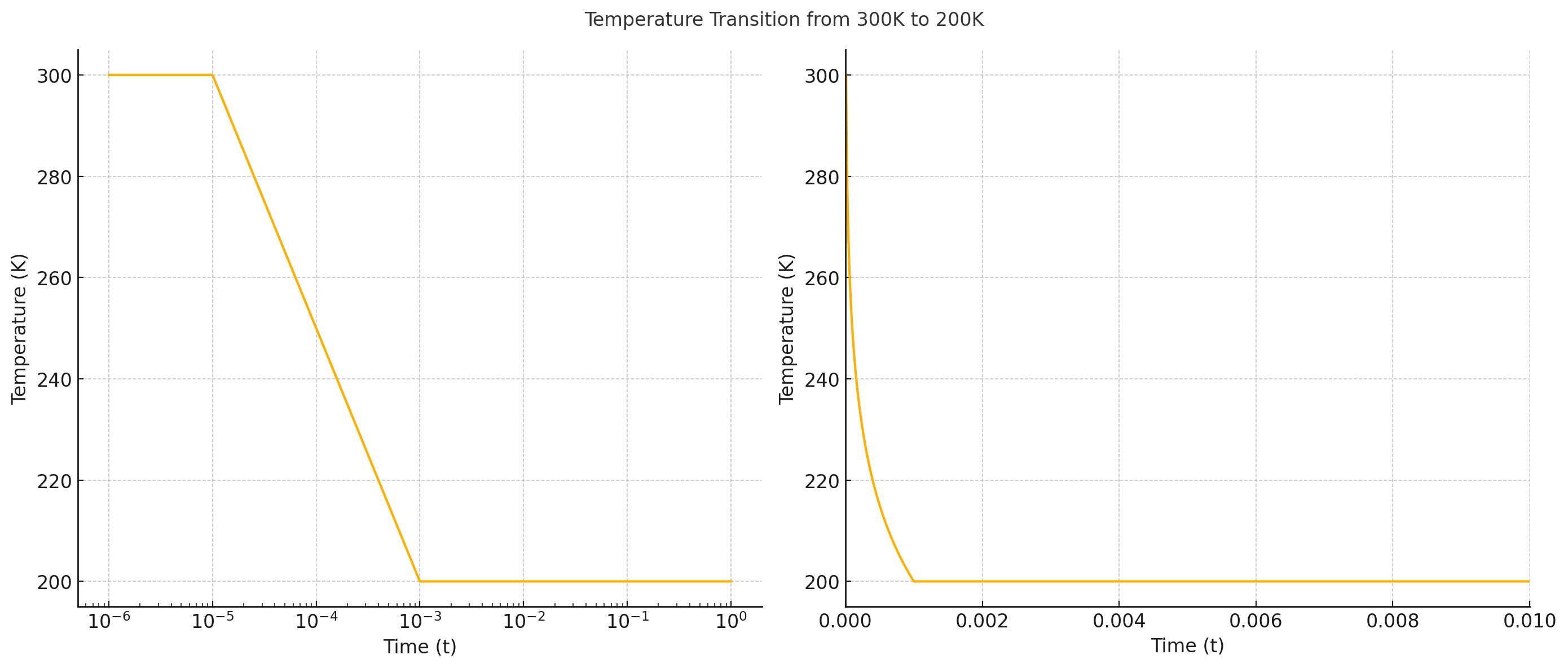}  \caption{Temperature path for temperature drop from $300K$ to $200K$}
    \label{fig:p1_temperature_changes}
\end{figure}

In particular we can make the assumption $W'_{U\rightarrow F}(t)\simeq 0$, since in this interval the function behaves almost constant, that is $W_{u\rightarrow F}(t)\simeq W_{U\rightarrow F}(t_0)$.  We now use Trotter-Katto Theorem to give a further approximation to $p_1(t)$. Define a time interval of analysis $[0,t_0]$. And define the partition of this interval upto a resolution, that is chose $0=a_0<a_1<...<a_n=t_0$. Next evolve the system from $t=0 $ to $t=a_1$ acording to 
\[p_1(t)=\frac{1}{\mu(t_0)}+
    \frac{W_{U\rightarrow F}(t_0)}{W_{U\rightarrow F}(t_0) + W_{F\rightarrow U}(t_0)}(1-e^{-\lambda(t_0)t})\]
which is a good approximation for a small intervals $t_0\in [a_0,a_1]$. Now using the property 2 of Definition \ref{FellerEvolution}, and the Trotter-Kato formula \ref{trotterkatoformula}, we can evolve the system until $t=a_1$, and then change the initial distribution of the system to be equal to $p_1(a_1)$, and evolve the system on $t\in [a_1,a_2]$ according to 
\[p_1(t)=\frac{p_1(a_1)}{\mu(t)}+
    \frac{W_{U\rightarrow F}(t_1)}{W_{U\rightarrow F}(t_1) + W_{F\rightarrow U}(t_1)}(1-e^{-\lambda(t_1)t})\]
for $t_1\in[a_1,a_2]$. We  then evolve the system until $t=a_2$, and repeat. This lead to a recursive expression which is good enough when $W'_{U\rightarrow F}(t)\simeq 0$ in sufficiently small intervals $[a_i,a_{i+1}]$. For the next figure, we set $W_{U\rightarrow J}=W_{J\rightarrow U}=10^{12}Hz$, $\Delta U_{U\rightarrow F} =0.5\  eV$, $\Delta U_{F\rightarrow U} =0.8 \ eV$. $T(t)$ change in different ranges, from an initial $T(t)=300K$ to different sub-temperatures, ranging from $290K$ to $200K$ in a short period of $10^{-5}$ to $10^{-3}$ seconds as shown in Figure 5.

Trotter-Kato approximation give us another important observation. If $\lim T(t)=T_0$, then is clear $p_1$ achieves a stationary distribution, this is a consequence of  \ref{trotterkatoformula}, since for sufficiently large $t$, the semigroup operator $\textbf{W}(t)$ behaves approximately constant, making the factors of the form $e^{t/n\textbf{W}(kt/n)}$ accumulate for $k>k_0$. And therefore, the system behaves as  $e^{t(n-k_0)/n \textbf{W}(k_0t/n)}$. This can be seen in the behavior of $p_1(t)$ in Figure 6. Where the relaxation is "delayed" in the same time scales as the temperature change. We see how a more drastic cooling lead to a more extended delay, but as mentioned before, the system achieve a stationary distribution since the temperature after a $t=10^{-3}$ remains constant. 

\begin{figure}[H]
    \centering
    \includegraphics[width=\textwidth]{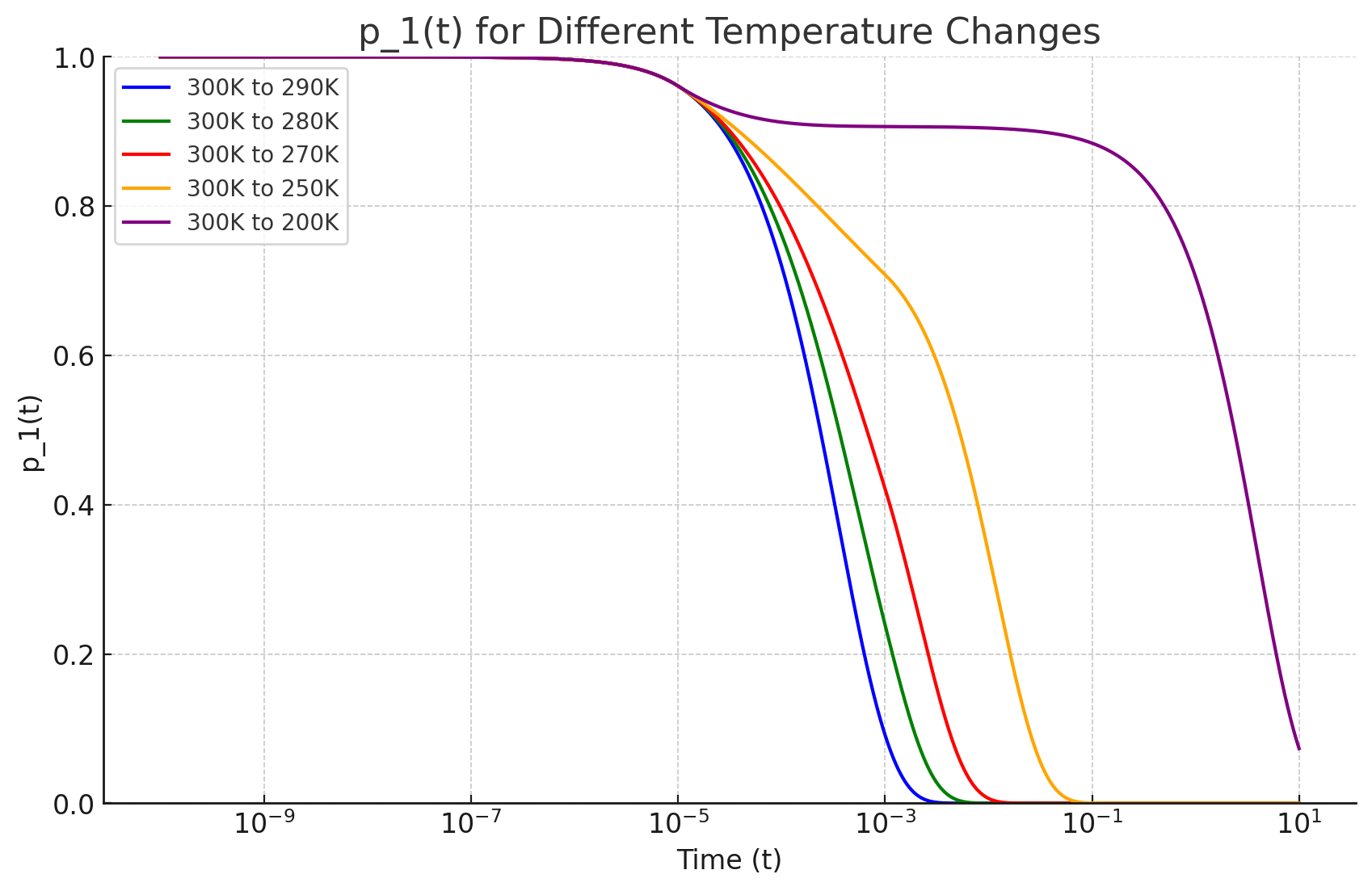}     \caption{Anomalous relaxation caused by super fast cooling .}
    \label{fig:p1_temperature_changes}
\end{figure}

In order to analyze the behavior of $S(t)$ (the characteristic relaxation of the intra-basin $B_{r_0}$, or the survival probability) we now take $W_I=10^{12}Hz$, and we take $r_0=-1$, $p=3$, $U(1)=0.4eV$ and $U(1/3)=0.38eV$. Note that this information is sufficient to calculate \( S(t) \) since only the wavelets with parameter \( r_0 =- 1 \) survive in the expansion of the function \( 1_{B_0} \) with respect to the eigenbasis. Therefore, it is only necessary to explicitly express the radial function \( w_U(|\cdot|_p,t) \) up to the corresponding levels \( |\cdot|_p = 1, 1/p \), i.e, only specify the energy barriers up to this levels.

\begin{figure}[h]
    \centering
    \includegraphics[width=\textwidth]{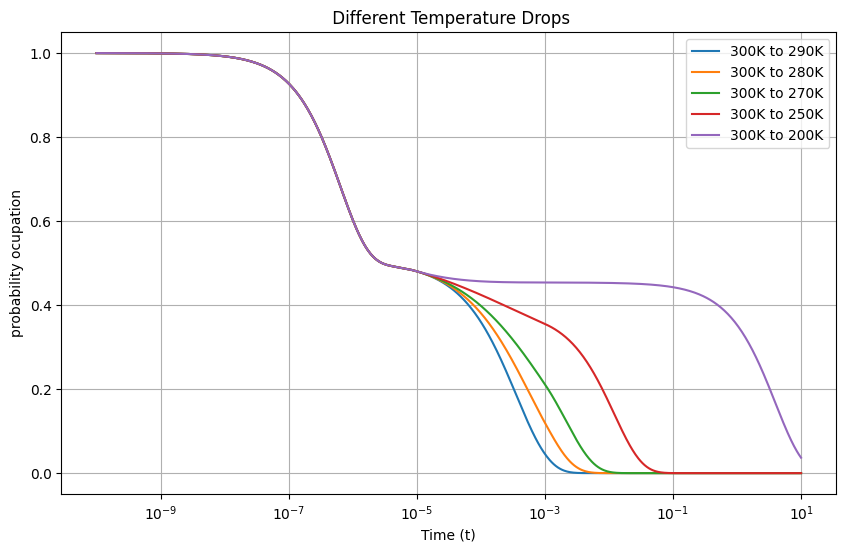} 
    \caption{Occupation probability of intra-basin $B_{r_0}$ by super fast cooling .}
    \label{fig:p1_temperature_changes}
\end{figure}
By using the $p$-adic parametrization we are able to describe the behavior of $S(t)$ intra-basin $B_{r_0}$ without the need to compute an extra approximation, since as shown in Theorem 3, the solution for the non-autonomous problem can be computed directly by the expansion of the initial condition in the eigenbasis. As shown in Figure 7, once again, we see a slowing effect on the dynamics as the temperature decreases, as expected; the relaxation slows down during the cooling period, creating tails of different lengths depending on how drastic the cooling was.  
\subsection{A protein protein folding example with dynamic transition rates.}
In order to implement our model to a protein folding scenario, we use the following expressions for temperature dependence for folding and unfolded transition rates  implemented in\cite{Guo2012Temperature} in order to study the temperature dependence on protein folding in living cells. 

\[
\ln\left(\frac{k_0}{k_f(T)}\right) = \frac{1}{R T} \left( \Delta H_f - T \Delta S_f + \Delta C_p^f \left( T - T_m + T \ln\left( \frac{T_m}{T} \right) \right) \right)
\]

\[
\ln\left(\frac{k_0}{k_u(T)}\right) = \frac{1}{R T} \left( \Delta H_u - T \Delta S_u + \Delta C_p^u \left( T - T_m + T \ln\left( \frac{T_m}{T} \right) \right) \right)
\]

Where the rate prefactor \( k_0 \) is a function of the solvent viscosity and its given by:

\[
k_0 = (10 \, \mu s)^{-1} \left( \frac{\eta(22^\circ C)}{\eta(T)} \right)
\]

Where

\[
\eta(T) = 0.226 + 1.0723 e^{\frac{-(T - 10^\circ C)}{33}}.
\]
The data used in the our model is given also in \cite{Guo2012Temperature} for invitro folding and its given in the following table.

\begin{table}[h]
\centering
\begin{tabular}{llcl}
\hline
\textbf{Parameter} & \textbf{Symbol} & \textbf{Value} & \textbf{Units} \\
\hline
Gas constant & $R$ & $8.314$ & J\,mol$^{-1}$\,K$^{-1}$ \\
Melting temperature & $T_m$ & $312.9$ & K \\
\hline
\multicolumn{4}{l}{\textbf{Folding:}} \\
Enthalpy change & $\Delta H_f$ & $-333 \times 10^3$ & J\,mol$^{-1}$ \\
Entropy change & $\Delta S_f$ & $-1.18 \times 10^3$ & J\,mol$^{-1}$\,K$^{-1}$ \\
Heat capacity change & $\Delta C_p^f$ & $-48\times 10^3$& J\,mol$^{-1}$\,K$^{-1}$ \\
\hline
\multicolumn{4}{l}{\textbf{Unfolding:}} \\
Enthalpy change & $\Delta H_u$ & $337 \times 10^3$ & J\,mol$^{-1}$ \\
Entropy change & $\Delta S_u$ & $0.96 \times 10^3$ & J\,mol$^{-1}$\,K$^{-1}$ \\
Heat capacity change & $\Delta C_p^u$ & $-38\times 10^3$& J\,mol$^{-1}$\,K$^{-1}$ \\
\hline
\end{tabular}
\caption{Thermodynamic parameters used in the study}
\label{tab:thermo_parameters}
\end{table}

\begin{figure}[H]
    \centering
    \includegraphics[width=\textwidth]{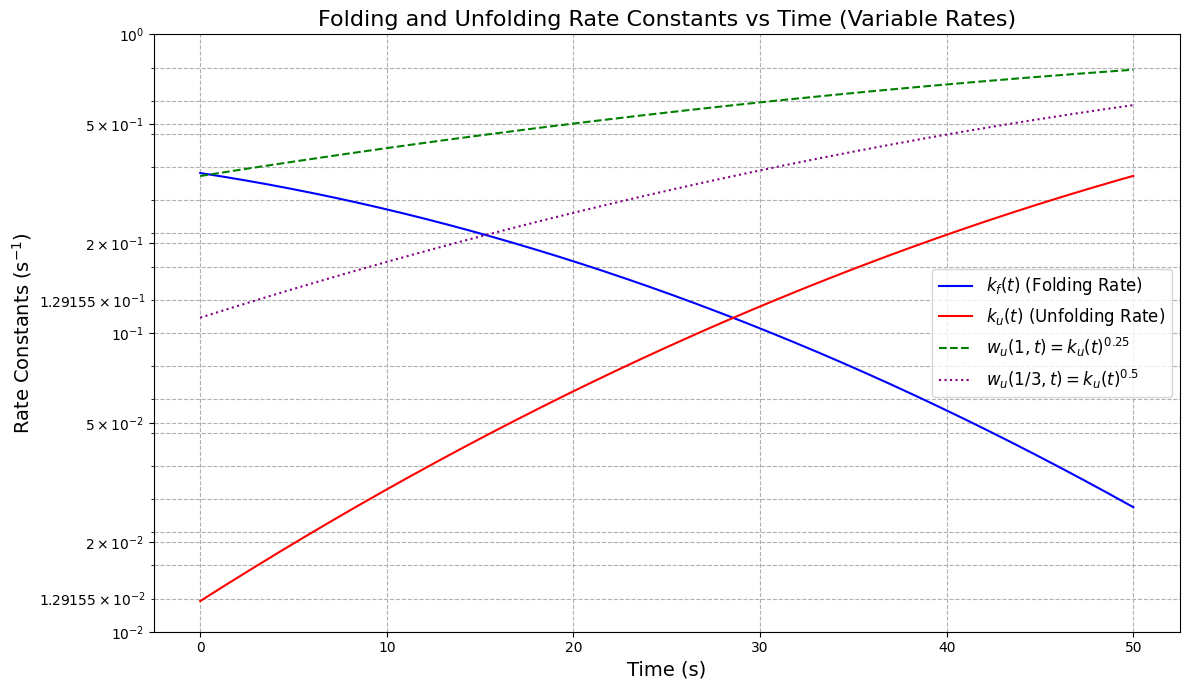}\caption{Rate constants $k_f$ and $k_u$ as a function of time. Additionally the first two values of the radial function $w_u(|\cdot|_p,t)$ are presented .}
    \label{fig:p1_temperature_changes}
\end{figure}

For the interbasin $B_{r_0}$ we use the parameters $r_0=-1$, $p=3$, $w_u(1,t)=k_u^{0.25}$ and $w_u(1/3,t)=k_u^{0.5}$, and we let the temperature increase linearly from 35.85 ° C to 43 ° C in 50 seconds. The selection of these parameters for the radial function is based on the following "toy-model" assumption: throughout the entire process, the first two energy barrier heights within $B_U$ are exactly one-half and one-quarter, respectively, of the energy barrier associated with $k_u$.

The corresponding functions are shown in Figure 8. We see an intersection point at the time where the temperature reaches the melting temperature. On the other hand, the values of the radial function (corresponding to the transition rates within the basin $U$) are most of times greater than $k_f$ and $k_u$, indicating that the unfolded substates interconvert rapidly compared to the folding and unfolding rates. \newline

To have a point of comparison, we show in the following figure the relaxations corresponding to the case when the transition rates are constant; the values of these transition rates are given by the values \( k_f(t) \) at different times \( t \). 

\begin{figure}[H]
    \centering
    \includegraphics[width=\textwidth]{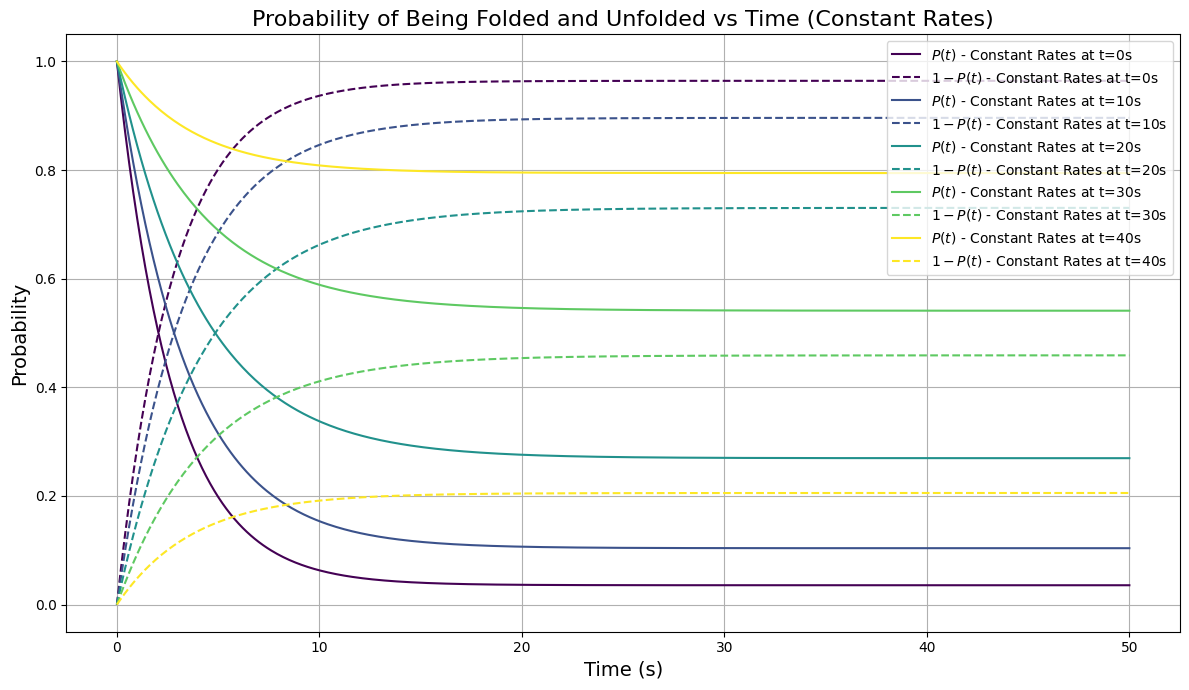}    \caption{Probabilities in the time-independent case.}
    \label{fig:p1_temperature_changes}
\end{figure}
Its clear that in this case, the probabilities follow the usual behavior of the classical two-state transition model (see \cite{Nolting2005Protein}). The time-dependent transition rate case is displayed below. 

\begin{figure}[H]
    \centering    \includegraphics[width=\textwidth]{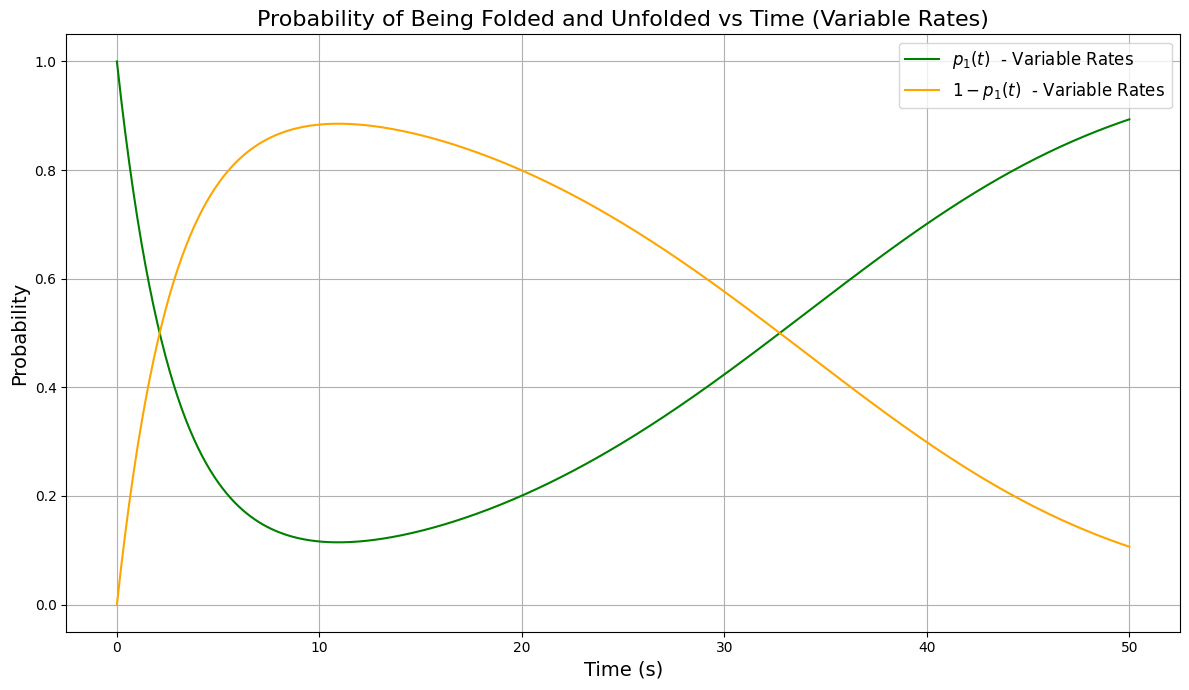}\caption{As before, $p_1$ represent the probability of being in the unfolded state, while $1-p_1(t)$ is the probability of being in the folded state.}
    \label{fig:p1_temperature_changes}
\end{figure}

\begin{figure}[H]
    \centering
    \includegraphics[width=\textwidth]{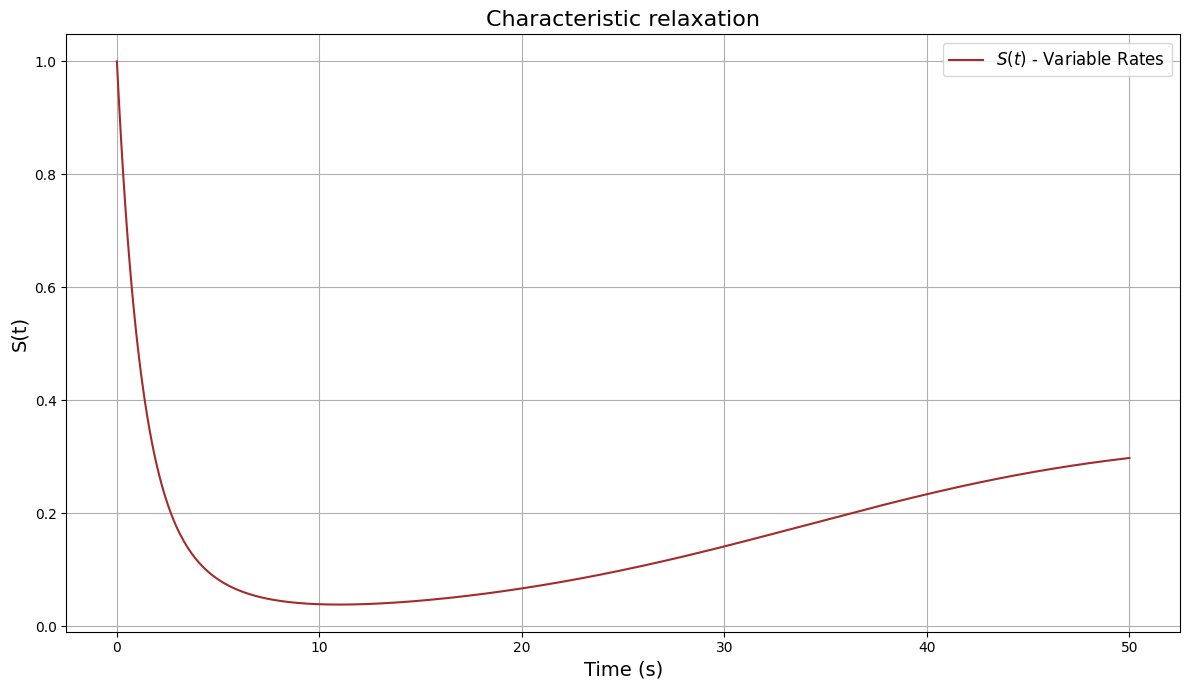}   \caption{Characteristic relaxation of the intrabasin $B_{r_0}$}
    \label{fig:p1_temperature_changes}
\end{figure}

We now explain the results shown in Figure 10 and Figure 11. Initially, the system behaves very similarly to the constant case shown in Figure~9. However, as the temperature increases and approaches the melting temperature, the dynamics change dramatically. In particular, it is interesting to note that seconds before reaching this temperature, the system begins a whiplash effect, where the tendency toward folding reverses, reaching an intersection, as expected, very close to the melting temperature. On the other hand, the unfolded state we analyzed corresponding to the intrabasin \( B_{r_0} \) follows this trend; however, the effect is not completely reversible. As can be observed, the probability \( S(t) \) approaches an equilibrium state because the system can occupy any other unfolded state within basin \( U \). This can be appreciated by noting that at \( t = 50 \), \( S(t) \) is approximately \( \frac{1}{3} \). It is worthwhile to mention that the $p$-adic setting developed in this work allows us to describe an infinite number of relaxation corresponding to smaller balls of the form $B_r$. The hierarchical organization allows us to give an analytic solution of $S(t)$ for any state inside the basin $U$ under the hypothesis, the ultrametricity of such basin prevails during the dynamic. 

\section*{Conclusions}
Our approach demonstrates that \( p \)-adic parametrization and ultrametric analysis are powerful tools for modeling complex systems with dynamic transition rates. By studying complex systems with analytical solutions, we can gain deeper insights into the relaxation processes of physical and biological systems without resorting to computationally intensive simulations. The hierarchical structure inherent in ultrametric spaces allows for an efficient representation of the state space, which is particularly beneficial when dealing with systems that exhibit fractal or self-similar properties.

While our model provides new insights in this direction, it relies on certain assumptions, such as the persistence of ultrametricity throughout the dynamics and the applicability of \( p \)-adic analysis to physical systems. Real-world systems might exhibit deviations from these assumptions due to perturbations or interactions not accounted in the model. Therefore, it is essential to consider these limitations when interpreting the results.

The successful application of our model to both glass relaxation and protein folding suggests that this framework can be extended to other complex systems where hierarchical organization and time-dependent dynamics are prominent. For instance, it could be applied to study of epidemiological models such as in \cite{Khrennikov2021Ultrametric} where the author propose to study the spreading of the COVID-19 by assuming a hierarchic social clustering of population. Adding the role of time-variant dependence on the "social barriers" may lead to new insights.

\section*{Acknowledgements}

I'm indebted to Wilson Z\'u\~niga-Galindo for sharing valuable insights into $p$-adic analysis throughout  my formative years. Patrick Bradley, Leon Nietsche, and David Weisbart are warmly thanked for important discussions and advices. This work is supported by the Deutsche Forschungsgemeinschaft
under project number 469999674.

\bibliographystyle{plain}
\bibliography{biblio}

\end{document}